\newcommand{\naturals}{\ensuremath{\mathbb{N}}\xspace}
\long\def\remove#1{}
\newcommand{\timeout}{judiciary period\xspace}
\newcommand{\gamename}{Queue\xspace}
\newcommand{\roundname}{Round\xspace}
\newcommand{\nashconv}{{\rm NashConv}\xspace}
\newcommand{\ifinesname}[1]{\ensuremath{#1}-Fines\xspace}
\newcommand{\icritstrategy}[1]{\ensuremath{\pi_a^{{\rm crit}, #1}}\xspace}
\newcommand{\critstrategy}{\ensuremath{\pi_a^{{\rm crit}}}\xspace}
\newcommand{\critstrategyext}[1]{\ensuremath{\pi_{#1}^{{\rm crit}}}\xspace}
\newcommand{\agentset}{\ensuremath{\mathcal{N}}\xspace}
\newcommand{\terminalset}{\ensuremath{\mathcal{T}}\xspace}
\newcommand{\game}{\ensuremath{\mathbb{G}}\xspace}
\newcommand{\gamefull}{\ensuremath{\mathbb{G}(F,Q,T,k,p,x,x_0,w)}\xspace}
\newcommand{\round}{\ensuremath{\mathbb{O}}\xspace}
\newcommand{\roundfull}{\ensuremath{\mathbb{O}(\agentset,F,Q,T,k,p)}\xspace}
\newcommand{\ifinesfull}[1]{\ensuremath{\mathbb{F}(#1,F,Q,k,p,x_0)}\xspace}
\newcommand{\ifinesfullk}[2]{\ensuremath{\mathbb{F}(#1,F,Q,#2,p,x_0)}\xspace}
\newcommand{\argmax}{{\rm argmax}\xspace}
\begin{document}

\title{Rule Enforcing Through Ordering
\thanks{This work has been supported by the CoSP, project n. 823748 H202-MSCA-RISE-2018. Computational resources were supplied by the project e-Infrastruktura CZ (e-INFRA CZ LM2018140) supported by the Ministry of Education, Youth and Sports of the Czech Republic.}}
%
%
\author{David Sychrovsky\inst{1}\orcidID{0000-0002-4826-1096} \and
Sameer Desai\inst{2}\orcidID{0000-0003-1987-6929} \and
Martin Loebl\inst{1}\orcidID{0000-0001-7968-0376}}
\authorrunning{D. Sychrovsky et al.}
%
\institute{$^1$Charles University, $^2$University of Passau\\
\email{\{sychrovsky, loebl\}@kam.mff.cuni.cz,
sameer.indirock@gmail.com}}
\maketitle              
\begin{abstract}

In many real world situations, like minor traffic offenses in big cities, a central authority is tasked with periodic administering punishments to a large number of individuals. Common practice is to give each individual a chance to suffer a smaller {\em fine} and be guaranteed to avoid the legal process with probable considerably larger punishment. 
However, thanks to the large number of offenders and a limited capacity of the central authority, the individual risk is typically small and a rational individual will {\it not} choose to pay the fine. 
Here we show that if the central authority processes the offenders in a publicly known order, it properly incentives the offenders to pay the fine. 
We show analytically and on realistic experiments that our mechanism promotes non-cooperation and incentives individuals to pay. Moreover, the same holds for an arbitrary coalition. We quantify the expected total payment the central authority receives, and show it increases considerably.

\keywords{rule enforcing; mechanism design; non-cooperation}
\end{abstract}





         
\newcommand{\BibTeX}{\rm B\kern-.05em{\sc i\kern-.025em b}\kern-.08em\TeX}




\section{Introduction}
\label{sec: intro}
In this work, we study a special case of a classic dilemma, how to effectively enforce a rule
in a large population with only a very small number of enforcing agents. This task is impossible if the large population cooperates and thus a critical aspect of any suggested mechanism is the promotion of non-cooperation. A well-known count Dracula way is to make the punishment for breaking the rule extremely severe. We suggest an alternative mechanism, for a special case of the dilemma motivated by collecting fines for traffic violations. 

In many large cities, there is a huge number of traffic offences, highly exceeding the capacity of state employees assigned to manage them. 
The assigned state employees should primarily concentrate on serious and repetitive offenders. However, a large number of minor offences are still to be settled which makes the former considerably harder. A common practise is that a smaller {\em fine} is assigned in an almost automated way and if an offender settles this fine then the legal process does not start. Otherwise, the legal process should start with considerably larger cost for the offender. The offence is also forgotten after a certain {\it \timeout}.  

However, thanks to the limited capacity of state employees, legal processes for non-repetitive minor traffic offenses are typically enforced in a small number of cases\footnote{For instance, in the city of Prague considerably more than 100 000 such offenses are dismissed every year because the \timeout expires.}. The individual risk is thus small and a large fraction of the offenders {\it choose} to ignore the fine. In this paper, we propose a simple mechanism which properly incentives the offenders to pay the fine even under these conditions.

\subsection{Main Contribution}
\label{ssec: contribution}

In our proposed mechanism, the central authority processes the offenders in a given order. Each offender is aware of his position in this `queue of offenders' and has the option of publicly donating money to a fund of traffic infrastructure or a charity predetermined by the central authority. If their total donations amount to at least the fine, it is used to settle the offence. After the \timeout expires, or if the legal process is started, the fund retains the individual donation.
The central authority periodically sorts offenders in ascending order of their average donation, and starts the legal process with those who paid the least on average.

Compared to processing the offenders in random order, this mechanism increases the individual risk of some offenders. This incentives them to pay the fine, which in turn puts others in danger.
We show both analytically and on realistic experiments that under the proposed mechanism, the strategic behaviour of the offenders is to engage with the mechanism, and quantify the expected revenue of the charity. 
Moreover, we show it is not beneficial for any group of offenders to ignore the mechanism and share the cost of those who enter the legal process. 
Finally, we study how the central authority can most efficiently use its limited capacity to maximize the revenue of the charity.

This paper is a continuation of \cite{Sychrovsky2023Promoting}, where the authors introduced the model studied here. We extend their work by providing a complete solution to $w$-Fines, see Section~\ref{sec: analytic}, as well as producing more thorough numerical experiments.

\subsection{Related Work}
\label{ssec: related work}

To our best knowledge, the field of non-cooperative mechanism design has not been studied extensively yet. Our approach is somewhat similar to that of \cite{Bidding2020Chiam}, where the authors consider a variation of the elimination game which includes bids. 
Our model can also be viewed as a generalization of the stopping games \cite{Kobylanski2013Dynkin}, where participants choose a time to stop bidding and trade off their gain from outlasting other players for the cost accumulated over time in the game. In our case, the ``prize'' won by the lowest paying participant is cost of entering the legal process.
However, both approaches did not consider the ranking of players, which is at the core of our mechanism.
\section{Problem Definition}
\label{sec: problem definition}

Informally, we model the interaction of agents as a game we call {\it \gamename}. \gamename consists of a finite sequence of {\it \roundname}, in which each agent can choose to pay, however with some probability they forget and pay nothing. Those who paid at least the fine in total, or spent enough time in \gamename are removed. The rest is ordered according to the amount they paid on average. A fixed number of those at the start are then forced to pay a large penalty, and leave \gamename.
Let us now define the interaction formally, starting with how \roundname is realized.

\subsection{\roundname: One Step in \gamename}
\label{ssec: round}
\roundname is a parametric game \round$(\agentset)$=\roundfull, where $\agentset$ is an ordered subset of agents\footnote{The agents are ordered according to their average payment in ascending order, i.e. those who paid the least on average are sorted to the front of \agentset.}, $F \in \naturals$ is the fine, $Q > F$ is the cost associated with entering the legal process, $T\in\naturals$ is the \timeout, i.e., the number of \roundname instances after which agents are removed, $k\in\naturals$ is the number of agents forced to pay $Q$ in each \roundname, $p\in[0,1]$ is the probability of ignorance.

Each $a\in \agentset$ is characterized by a triplet $(n_a, t_a, m_a)$ and his strategy $\pi_a$. The triplet corresponds to his {\it observations} --- his position $n_a$ in $\agentset$, the number $t_a$ of past \roundname games he participated in\footnote{This includes the current Round, i.e. $t_a\ge 1$.}, and his total individual payment $m_a$ in the past \roundname games. 

\roundname proceeds in three phases
\begin{enumerate}
    \item Each agent $a\in \agentset$, based on his observation, declares his strategy for this \roundname $\pi_a\in\Delta^{F+1}$, where $\Delta$ is the probability simplex. His payment $\mu_a$ is then sampled from\footnote{This simulates that with probability $p$, the agent forgot to act in this \roundname.}
    \begin{equation}
        \mu_a \sim p\sigma^0 + (1-p)\pi_a(n_a, t_a, m_a),
    \end{equation}
    where $\sigma^\nu$ is the pure strategy of paying $\nu$.
    \item Each agent's total payment and time is updated
    \begin{align}
        m_a &\gets m_a + \mu_a,\\
        t_a &\gets t_a + 1,
    \end{align}
    and $\agentset$ is sorted\footnote{We use stable sort, i.e. whenever there is a tie, the original order is preserved.} according to the ratio of current total payment and time $m_a / t_a$.
    \item Some agents are removed from \agentset, which is done in three sub-phases. We call such agents {\it terminal} and denote the set of terminal agents in this \roundname as \terminalset.
    \begin{enumerate}
        \item All agents $a\in \agentset$ with $m_a \ge F$ are removed.
        \item First $k$ agents in $\agentset$ have their $m_a$ increased by $Q$ and are removed.
        \item All agents $a\in \agentset$ with $t_a \ge T$ are removed.
    \end{enumerate}
\end{enumerate}

The result of each \roundname is the ordered set of agents $\agentset\setminus\terminalset$, and the set of terminal agents \terminalset. Only the terminal agents are assigned their final utility.
\begin{definition}[Utility]
The utility of each agent $a\in \terminalset$ is the negative amount he paid
\begin{equation}
\label{eq: utility}
    u_a = -m_a.
\end{equation}
\end{definition}

\subsection{\gamename: A Game on Updating Sequences} 
\label{ssec: game}
Formally, \gamename is \game = \gamefull, where $F,Q,T,k$ and $p$ have the same meaning as in Section~\ref{ssec: round}, $x$ is the number of entering agents after each \roundname, $x_0$ is the initial size of $\agentset$ and $w$ is the horizon, i.e. the number of repetitions of \roundname.

\gamename aggregates \roundname in the following two simple phases. Starting with $\agentset^1$ s.t. $|\agentset^1|=x_0$, and $m_a,t_a=1$ for each $a\in\agentset^1$. We repeat them $w$-times.
\begin{enumerate}
    \item The agents in $\agentset^t$ play \roundname and non-terminal agents proceed to the next iteration.
    \begin{equation}
        \agentset^{t+1}, \terminalset^{t+1} \gets \round(\agentset^t).
    \end{equation}
    \item $x$ new agents enter the game
    \begin{equation}
        \agentset^{t+1} \gets \agentset^{t} \cup X,
    \end{equation}
    where $X$ is a set of agents with $m_a, t_a = 0$, and $|X|=x$. These new agents are sorted to the end of $\agentset^{t+1}$.
\end{enumerate}
\*
In the last \roundname, all agents terminate, $\terminalset^w \gets \terminalset^w \cup \agentset^w$.

\medskip

The new agents come from universum $U$. The strategy of all agents is then given as $\pi=\times_{a\in U}\pi_a$. We denote space of all such strategies as $\Pi$.

Each agents wants to choose strategy $\pi_a$, which maximizes their utility in \game given strategies of other agents $\pi_{-a}$. A strategy profile $\pi\in\Pi$ is an equilibrium, if no agent can increase his utility. Formally,
\begin{definition}[$\epsilon$-Equilibrium]
\label{def: equilibrium}
$\pi\in\Pi$ is an $\epsilon$-equilibrium of \game if $\forall\ \overline{\pi}\in\Pi$, $\forall t\in \{1,\dots w\}$ and $\forall a\in\terminalset^t$,
\begin{equation}
    \mathbb{E}_\pi\left[u_a(\pi)\right] \ge 
    \mathbb{E}_{(\overline{\pi}_a \pi_{-a})}\left[u_a(\overline{\pi}_a, \pi_{-a})\right] - \epsilon.
    \hspace{3ex}
\end{equation}
\end{definition}

We note that the equilibrium always exists which can be shown by a standard transformation to a normal form game. 

\subsection{Avalanche Effect}
\label{ssec: avelanche effect}
Intuitively, every agent wants to pay as little as possible, while avoiding paying $Q$. This translate to paying more than the others. However, if all agents adapt this reasoning, the only option to avoid paying $Q$ is to pay $F$. We formally show this in Section \ref{ssec: p=0}

Crucially, not all other agents can use this reasoning thanks to the probability of ignorance. But as that vanishes, the agents should be incentivised to pay more. Similarly, if the number of entering agents increases, so should the total payment.
We formally capture this in the {\it avalanche effect}.
\begin{definition}[Avalanche effect]
\label{def: avalanche effect}
We say that \gamename exhibits the {\em avalanche effect} if at least one of the following holds in equilibrium when changing $p$, or $x$.
\begin{enumerate}
    \item 
    The expected terminal payment of all agents is increasing with $p\to 0^+$
    \begin{equation}
        \lim_{p\to0^+}\frac{\rm d}{{\rm d} p}\sum_{a\in\terminalset}m_a < 0.
    \end{equation}
    \item
    The expected terminal payment of all agents decreases slower than $1 / x$
    \begin{equation}
        \frac{\rm d}{{\rm d} x}\sum_{a\in\terminalset}m_a > 0, \hspace{3ex} \forall x > 0.
    \end{equation}
    
\end{enumerate}
\end{definition}

\subsection{Division Problem}
\label{ssec: division problem}
In our model, the \timeout is split into $T$ equal time intervals and sorted at the start of each interval. The central authority can process $kT$ offenders over the \timeout, and $xT$ will enter the system.

The central authority can influence the system in two ways. 
\begin{enumerate}
    \item it can choose how often the sorting takes place, and
    \item it can virtually split the entering offenders into $g$ groups of size $x/g$, and process $k/g$ offenders in each.
\end{enumerate}

The {\it Division problem} is how to set $T$ and $g$ to maximize the expected revenue the central authority receives. We refer to the two cases as {\it Time-Division problem} and {\it Group-Division problem} respectively.

\section{Analytic Solution}
\label{sec: analytic}

As described in Section~\ref{sec: intro}, the individual risk when the central authority processes the agents in random order is typically small, i.e. $kQ/|\agentset| \ll F$. Each agent is also guaranteed to pay $kQ/|\agentset|$ if everyone cooperates and shares the costs of those entering the legal process. Let us begin by showing that this is not the case in our proposed system. That is, there is no coalition can benefit from choosing to pay nothing and share the cost of those forced to pay $Q$. In our setting, this is analogous to coalition proofness.
\begin{proposition}
Let $\mathcal{A}$ be a set of agents using strategy $\pi_a = \sigma^0\ \forall a \in \mathcal{A}$, and sharing the cost, i.e. their utility becomes
\begin{equation*}
    \tilde{u}_a = -\frac{1}{|\mathcal{A}|}\sum_{i=1}^w\sum_{a\in \mathcal{A}\cap \terminalset^i} m_a,
    \hspace{3ex}
    \forall a \in \mathcal{A}.
\end{equation*}
If $\tilde{u}_a<0$, then $\exists a' \in \mathcal{A}$ s.t. $a'$ can deviate and increase his utility. 
\end{proposition}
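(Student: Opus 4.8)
The plan is to exploit the fact that every member of $\mathcal{A}$ plays $\sigma^0$, so in each \roundname a member's payment stays $0$ and he can only ever be removed in two ways: by the \timeout in sub-phase 3c (contributing $0$ to the cost) or by being caught among the first $k$ in sub-phase 3b (contributing exactly $Q$); he never triggers sub-phase 3a since $m_a$ never reaches $F$ before removal. Writing $c_a=\mathbb{E}[m_a]$ for the expected individual payment of member $a$ under the all-$\sigma^0$ profile, the expected total coalition cost is $\sum_{a\in\mathcal{A}}c_a$, so the shared utility satisfies $\mathbb{E}[\tilde u_a]=-\tfrac{1}{|\mathcal{A}|}\sum_{a\in\mathcal{A}}c_a=:-\bar c$. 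I read the hypothesis as saying the coalition actually pays a positive total, i.e. $\bar c>0$, which forces at least one member to be hit by $Q$ with positive probability.

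First I would split on whether the individual exposures $c_a$ are all equal. If they are not, then the least-exposed member $a'$ satisfies $c_{a'}<\bar c$, and $a'$ can deviate by keeping $\sigma^0$ but stepping out of the cost-sharing agreement: his own utility is then $-c_{a'}>-\bar c=\mathbb{E}[\tilde u_{a'}]$, a strict improvement. This handles the generic situation, since the stable sort preserves entry order and therefore makes earlier-entering members strictly more exposed to the first-$k$ removal than later ones, so the $c_a$ are typically not constant.

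Next I would treat the degenerate case where all $c_a$ equal the common value $\bar c>0$. Here I would use a \emph{dodge} deviation: pick any member $a'$ and have him pay the minimal positive amount once, raising his ratio $m_{a'}/t_{a'}$ strictly above the value $0$ shared by all still-cooperating members and pushing him behind them in the sort; provided at least $k$ zero-ratio agents sit ahead of him at each subsequent sorting, he is never among the first $k$ and his expected payment drops strictly below $\bar c$, again giving utility $>-\bar c$. As a clean fallback when exposure is high I would also note the deviation of paying $F$ up front, which removes the agent in sub-phase 3a before the $Q$-phase, capping his cost at $F<Q$ and beating the shared outcome whenever $\bar c>F$.

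The main obstacle is the dodge analysis in the multi-round dynamics: I must show that a single small payment genuinely lowers a member's $Q$-exposure rather than merely postponing it, while accounting for the removals in sub-phases 3a--3c, the influx of $x$ fresh zero-ratio agents each \roundname, the ignorance probability $p$ (which creates further zero-ratio agents among non-members and perturbs the realized queue), and the stable tie-breaking. Concretely, the delicate step is to guarantee a zero-ratio ``cushion'' of at least $k$ agents ahead of the deviator at every sorting, or otherwise to bound the cost of repeatedly re-dodging strictly below $\bar c$; combining this with the averaging argument of the non-degenerate case is what yields a profitable deviator in all cases.
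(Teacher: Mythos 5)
Your proposal and the paper's proof share two ingredients but differ in one structural choice that turns out to matter a great deal. Your unequal-exposure branch (the least-exposed member keeps $\sigma^0$ and simply stops sharing) is the same deviation the paper uses in its first case, and your pay-$F$ fallback is the paper's second case. The difference is that the paper argues \emph{per realization}: if the realized shared cost lies strictly between $0$ and $Q$, then not every member of $\mathcal{A}$ was forced to pay $Q$, so the member who terminated last realized-paid $0$; his deviation of refusing to share changes nothing about how the game is played (he still plays $\sigma^0$), only the accounting, so his utility in that very realization improves from $-\tilde{u}$ to $0$ with no analysis of the queue dynamics whatsoever. If instead every member was forced to pay $Q$, the shared cost is exactly $Q$ and switching to paying $F<Q$ wins. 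You instead recast everything in expectations ($c_a$, $\bar c$), and it is precisely this reformulation that creates the symmetric case your tools cannot close.

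That symmetric case is a genuine gap, and not a patchable technicality. When all exposures equal $\bar c$ with $0<\bar c\le F$, leaving the sharing agreement is utility-neutral, the pay-$F$ fallback loses money, and the ``dodge'' as you specify it (a single unconditional minimal payment) can be strictly \emph{unprofitable}: payments are integer units, so the dodge costs at least $(1-p)\cdot 1$ in expectation from the payment alone, which already exceeds the entire exposure whenever $\bar c<1-p$ --- and small $\bar c$ is exactly the regime of small individual risk that motivates the paper ($kQ/|\mathcal{N}|\ll F$). Any repair would need a danger-triggered payment (pay only when near the front) together with a proof that at least $k$ lower-ratio agents sit ahead of the deviator at every subsequent sort, accounting for removals, the $x$ entering agents, and ignorance; you name this cushion problem but do not solve it, and it is a substantial piece of dynamic analysis rather than a finishing touch. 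In short: your first branch is fine and mirrors the paper, but the degenerate branch is open, whereas the paper's ex-post choice of the deviator makes the whole issue vanish.
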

\begin{proof}
We split the proof into two parts according to how much an individual needs to contribute.
\begin{enumerate}
    \item $0>\tilde{u}_a > Q$: In this situation, not all agents of $\mathcal{A}$ were forced to pay $Q$. Consider the agent $a'\in\agentset$ who terminated last. Then, since $a'$ paid zero, his original utility is zero and $\tilde{u}_a<u_a$. Therefore, $a'$ would benefit from leaving the coalition $\mathcal{A}$.
    \item $\tilde{u}_a=-Q$: In this case, all agents were forced to enter the legal process. Any $a\in\mathcal{A}$ would therefore benefit from paying the fine, since then his utility is $u_a = -F > -Q = \tilde{u}_a$.
\end{enumerate}
\end{proof}

While existence of an analytic solution of \gamename remains an open question, we can find it in certain special cases. 

\subsection{Active participants}
\label{ssec: p=0}
Let us first focus on a situation when no agent forgets to participate in \roundname, i.e. $p=0$.
Then it is easy to see that $\pi_a = \sigma^F$ is unique equilibrium. 
Consider the first agent $a\in \agentset$ in the first \roundname, who chose to pay $\mu_a < F$. Then he is forced to pay $Q$, resulting to utility $u_a = -Q-\mu_a < -F$. Therefore, switching to paying $F$ is beneficial and the strategy of paying $\mu_a < F$ is not an equilibrium. This means all agents will pay $F$ in the first \roundname, and the situation thus repeat in the following \roundname.

\subsection{\ifinesname{w}: Special Case of \gamename}
Let us focus on the system without the introduction of the option to donate a portion of the fine. Thus after scaling currency we can let $F=1$, and there are only two pure strategies $\sigma^0, \sigma^F$ the agents can take.
If now $T=w$ and no agents are added after each \roundname $x=0$, we call the game {\it \ifinesname{w}}.

\begin{definition}[\ifinesname{w}]
We refer to reduced \gamename $$\ifinesfull{w} = \game(F,Q,w,k, p,0,x_0,w)$$ as \ifinesname{w}.
\end{definition}

We begin by showing a crucial property of \ifinesname{w}.

\begin{lemma}
\label{lem: irelevant after}
In the \ifinesname{w}, the expected payment of $\forall a\in\agentset$ depends only on the actions of agents in front of $a$.
\end{lemma}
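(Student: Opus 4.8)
The plan is to exploit the fact that in \ifinesname{w} the fine is scaled to $F=1$ and the only available payments are $0$ and $1$, so that any agent who ever pays the fine is immediately removed and every surviving agent carries total payment $m_a=0$. First I would establish this invariant: because sub-phase (3a) deletes every agent with $m_a\ge F=1$ \emph{before} the penalty sub-phase (3b) assigns $Q$, an agent can remain in \agentset only by having paid $0$ in every \roundname so far (payments are integers in $\{0,1\}$, so there is no accumulation strictly below $F$). Hence at the start of every \roundname each survivor has $m_a=0$, and therefore average payment $m_a/t_a=0$.

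Next I would use this invariant to pin down the sorting. Since $x=0$ the population only shrinks from its initial $x_0$ agents, so ``the initial order'' is a well-defined fixed reference. Because all surviving agents share the identical, minimal average $0$, the stable sort in phase (2) leaves their relative order unchanged; moreover no agent can ever have a strictly smaller average than $a$ (payments are nonnegative), so an agent that starts behind $a$ can never move in front of $a$. By induction over \roundname instances, the order of survivors is always exactly the initial order restricted to the not-yet-removed agents, and $a$'s position equals one plus the number of survivors lying in front of $a$ in that order.

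From here the conclusion follows directly. Agent $a$ incurs payment in only two ways: he voluntarily plays $\sigma^F$, which is his own action, or he pays $0$ this \roundname and his position is among the first $k$, triggering the forced $Q$ in sub-phase (3b). By the previous step, whether his position is $\le k$ depends only on how many agents in front of $a$ have survived and paid $0$ this round --- that is, purely on the actions of agents in front of $a$. Agents behind $a$ never occupy the first $k$ slots ahead of him, so they cannot affect his outcome. Taking expectations over the ignorance probability $p$ and over the strategy profile then gives that $\mathbb{E}[m_a]$ is a function of the strategies of the agents in front of $a$ (together with $a$'s own strategy) alone.

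The main obstacle I expect is the bookkeeping around the three sub-phases of phase (3) and the tie-breaking rule: I must argue in the correct order that fine-payers are purged (3a) before the $Q$-assignment (3b), and justify carefully that the stable sort, combined with the fact that survivors all have average \emph{exactly} $0$ (not merely approximately), rules out any reordering that would let a rear agent overtake $a$. Once that ordering invariant is nailed down, the dependence-only-on-front claim is immediate.
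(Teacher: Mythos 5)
Your proof is correct and follows essentially the same approach as the paper's: agents behind $a$ can never overtake him under the stable sort, so only his own action and the actions of agents in front determine whether he pays $F$, $Q$, or nothing. The paper's proof is a two-sentence sketch of this; your version merely makes explicit the supporting invariant (all survivors have $m_a=0$, hence identical averages) that justifies the order-preservation claim.
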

\begin{proof}
If $a$ pays zero, he remains in the \gamename and is sorted in front of agents who were behind him. He is potentially forced to pay $Q$, depending on the actions of agents in front of him. If he pays $F=1$, he is removed. In either case, the actions of agents behind $a$ have no impact on his payment.
\qed
\end{proof}

In each \roundname, $a\in\agentset$ has $n_a-1$ agents in front of him. Due to the probability of ignorance, even if all the agents decide to pay, $a$ can estimate the probability that at most $k-1$ will forget. If that happens, $a$ will be forced to pay $Q$ in this \roundname. Formally,

\begin{definition}
\label{def.a}
Let $n$ be a positive integer. We denote by $\alpha(p,n,k)$ the probability that in $n-1$ independent coin tosses with the head probability $p$, the number of heads is less than $k$.
\end{definition}

Since $\alpha$ will be important in the following discussion, we briefly mention some of its properties.

\begin{lemma}
\label{l.al}
Let $k<np$, then $\alpha(p,n+1,k) \leq  e^{-\frac{( np-k )^2}{2np}}$.
\end{lemma}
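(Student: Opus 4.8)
The plan is to recognize $\alpha(p,n+1,k)$ as a lower-tail deviation probability for a binomial random variable and then apply the multiplicative Chernoff bound. By Definition~\ref{def.a}, $\alpha(p,n+1,k) = \mathbb{P}(X < k)$, where $X = \sum_{i=1}^{n} X_i$ is a sum of $n$ independent Bernoulli variables with $\mathbb{P}(X_i = 1) = p$, so $X \sim \mathrm{Bin}(n,p)$ with mean $\mu = np$. The hypothesis $k < np$ says we are looking at a deviation \emph{below} the mean, so the natural tool is the exponential Markov inequality applied to $e^{-tX}$ with $t > 0$. A quick sanity check confirms the target is exactly the standard form: writing $\delta = (np-k)/(np) \in (0,1)$ we have $k = (1-\delta)\mu$ and $\mu\delta^2/2 = (np-k)^2/(2np)$, so the claim is precisely the lower-tail estimate $\mathbb{P}(X \le (1-\delta)\mu) \le e^{-\mu\delta^2/2}$.

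First I would reduce the strict inequality to a non-strict one via $\mathbb{P}(X < k) \le \mathbb{P}(X \le k)$, so it suffices to bound $\mathbb{P}(X \le k)$. Then, for any $t > 0$, Markov's inequality gives
\begin{equation*}
    \mathbb{P}(X \le k) = \mathbb{P}\!\left(e^{-tX} \ge e^{-tk}\right) \le e^{tk}\,\mathbb{E}\!\left[e^{-tX}\right].
\end{equation*}
Using independence and the Bernoulli moment generating function, $\mathbb{E}[e^{-tX}] = (1 - p + pe^{-t})^n$, and the elementary inequality $1 + y \le e^y$ yields $\mathbb{E}[e^{-tX}] \le \exp\!\big(-np(1 - e^{-t})\big)$. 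Substituting the near-optimal choice $t = -\ln(1-\delta)$ (equivalently $e^{-t} = k/\mu$) collapses the exponent to $\mu\big[-(1-\delta)\ln(1-\delta) - \delta\big]$.

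The final step is to tighten this to the quadratic form: I would invoke the elementary inequality $(1-\delta)\ln(1-\delta) \ge -\delta + \delta^2/2$, valid for $0 \le \delta < 1$ (provable by comparing Taylor series or a one-variable convexity argument), which turns the exponent into $-\mu\delta^2/2 = -(np-k)^2/(2np)$ and gives the claim. I expect the only real obstacle to be bookkeeping rather than anything conceptual: getting the sign of $t$ right because the deviation is below the mean, confirming $\delta \in (0,1)$ so the logarithmic inequality applies, and justifying the passage from $\mathbb{P}(X<k)$ to $\mathbb{P}(X\le k)$. If one prefers not to re-derive Chernoff, the statement also follows immediately by citing the standard multiplicative Chernoff lower-tail bound under the substitution $\delta = (np-k)/(np)$.
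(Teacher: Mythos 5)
Your proof is correct and takes essentially the same approach as the paper's: both identify $\alpha(p,n+1,k)$ as the lower tail of a $\mathrm{Bin}(n,p)$ variable and apply the multiplicative Chernoff bound $\mathbb{P}[X \le (1-\delta)np] \le e^{-\delta^2 np/2}$ with the substitution $\delta = (np-k)/(np)$. The only differences are that the paper cites this bound as a black box while you re-derive it via exponential Markov and the inequality $(1-\delta)\ln(1-\delta) \ge -\delta + \delta^2/2$, and that you handle the strict-versus-nonstrict tail inequality explicitly, a point the paper silently elides when writing $\alpha(p,n+1,k) = \mathbb{P}[\xi^n \le k]$.
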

\begin{proof}
Let $\xi_i$ denote the random variable such that $$ \xi_i = \begin{cases}
  1  &  \text{ w.p. } p, \\
  0 &   \text{ otherwise },
\end{cases}
$$
and $\xi^n = \underset{i=1}{\overset{n}{\sum}}\xi_i$. Thus, $\mathbb{E}[\xi_i]=p$ and $\mathbb{E}[\xi^n]=np$.
As per the Chernoff bounds, $\mathbb{P}[\xi^n \leq (1-\delta)np]\leq e^{\frac{-\delta^2np}{2}}$, for all $ 0<\delta<1$.
Thus $\alpha(p,n+1,k)=\mathbb{P}[\xi^n \leq k] \leq e^{{-\left( 1-\frac{k}{np} \right) ^2np}\slash{2}} = e^{-\frac{\left( np-k \right)^2}{2np}}$.
\end{proof}

\begin{proposition}
\label{p.aa}
If $\alpha(p,n,k)\leq F/Q\leq \frac{1}{4}$ then $np > k$. Moreover for each positive integer $w$ and large enough $n$, $\alpha(p,n,k)\geq \alpha(p,wn,wk)$.
\end{proposition}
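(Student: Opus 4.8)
The plan is to handle both claims through the representation in Definition~\ref{def.a}, writing $\alpha(p,n,k)=\mathbb{P}[S_{n-1}<k]$, where $S_{m}\sim\mathrm{Bin}(m,p)$ is a sum of $m$ independent $p$-coins. The two claims then become tail comparisons for binomials, and the recurring theme is that the constant $1/4$ in the hypothesis is exactly the amount of slack a median estimate provides.

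For the first claim I would argue by contraposition: assume $np\le k$ and deduce $\alpha(p,n,k)>1/4$. The point is that $k$ dominates the mean $np$ of $S_n$, and since the median of a binomial lies within distance one of its mean, $\lceil np\rceil\le k$ forces $\mathbb{P}[S_n\le k]\ge 1/2$. To pass from $n$ trials back to the $n-1$ trials appearing in $\alpha$, I would condition on the last coin, $S_n=S_{n-1}+\xi$ with $\xi\sim\mathrm{Bern}(p)$ independent, giving the identity $\mathbb{P}[S_n\le k]=\alpha(p,n,k)+(1-p)\,\mathbb{P}[S_{n-1}=k]$, hence $\alpha(p,n,k)\ge \tfrac12-(1-p)\,\mathbb{P}[S_{n-1}=k]$. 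Since $k\ge np$ sits at or beyond the mode $\lfloor np\rfloor$ of $S_{n-1}$, the pmf is non-increasing there, and it remains to check the single point mass is below $1/4$. I expect the main obstacle in this part to be precisely the integer bookkeeping between $n$, $n-1$, $k$, and the (possibly non-integer) mean: the point mass $\mathbb{P}[S_{n-1}=k]$ can itself exceed $1/4$ for very small $n$ or for $p$ near the extremes, so those residual cases must be closed separately, either by direct evaluation or by invoking a tight ``binomial is at least its rounded mean with probability $>1/4$'' bound, rather than by the median estimate alone.

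For the ``moreover'' part I would keep the standing hypothesis $\alpha(p,n,k)\le 1/4$, so that the first claim already gives $np>k$; setting $a:=k-np<0$ places both thresholds strictly below their means. The threshold in $\alpha(p,n,k)=\mathbb{P}[S_{n-1}<k]$ lies $k-(n-1)p=a+p$ above its mean with standard deviation $\sqrt{(n-1)p(1-p)}$, while in $\alpha(p,2n,2k)=\mathbb{P}[S_{2n-1}<2k]$ the threshold lies $2k-(2n-1)p=2a+p$ above its mean with standard deviation $\sqrt{(2n-1)p(1-p)}$. A short computation shows the difference of the two standardized thresholds is proportional to $a(1-\sqrt2)+p(1-1/\sqrt2)$, which is positive exactly because $a<0$. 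Since $\alpha$ is, to leading order, increasing in the standardized threshold, the first $z$-score being larger yields $\alpha(p,n,k)\ge\alpha(p,2n,2k)$.

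Making this rigorous is where the real work lies. I would quantify the normal approximation of both tails by a Berry--Esseen bound, so the error is $O(1/\sqrt n)$, and evaluate $\Phi$ at the continuity-corrected standardized thresholds. Here the hypothesis $\alpha(p,n,k)\le 1/4$ is used essentially: it forces the $z$-score below the negative constant $\Phi^{-1}(1/4)$, hence $a$ of order $-\sqrt n$, which makes the gap between the two standardized thresholds $\Theta(1)$ rather than $o(1)$. This constant gap dominates the $O(1/\sqrt n)$ approximation error for all large $n$, which is exactly what ``for large enough $n$'' buys; the off-by-one discrepancies ($n$ versus $n-1$, $2n$ versus $2n-1$) are lower order and absorbed into the same error term. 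The expected main difficulty is therefore controlling the Berry--Esseen constant uniformly in $p$, so that the inequality holds for the true binomial tails and not merely for their Gaussian surrogates.
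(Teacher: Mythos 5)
Your treatment of the first claim is sound and is essentially the paper's own route: the paper simply cites a tight lower bound of the form $\Pr[\gamma \le np] > \tfrac14$ for $\gamma\sim B(n,p)$ (Greenberg--Mohri) and concludes by contraposition, which is exactly the ``tight rounded-mean bound'' you anticipate needing to close the residual point-mass cases; your median-plus-decomposition derivation is a slightly more elementary variant of the same idea, and your identity $\Pr[S_n\le k]=\alpha(p,n,k)+(1-p)\Pr[S_{n-1}=k]$ is correct.

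The second claim is where your proposal has a genuine gap. Your final step asserts that because the two standardized thresholds differ by a positive constant, the difference $\Phi(z_1)-\Phi(z_2)$ dominates the Berry--Esseen error $O(1/\sqrt n)$. That implication fails when both $z$-scores diverge to $-\infty$: the hypothesis $\alpha(p,n,k)\le F/Q\le \tfrac14$ bounds $\alpha$ only from \emph{above}, so it forces $k-np\le -c\sqrt n$ but puts no lower bound on $k-np$. Take $k$ fixed (say $k=1$) and $n\to\infty$: the hypothesis holds, yet $\alpha(p,n,k)$ and $\alpha(p,2n,2k)$ are both exponentially small, hence $\Phi(z_1)-\Phi(z_2)$ is exponentially small as well, while the Berry--Esseen error is $\Theta\left(1/\sqrt{np(1-p)}\right)$; the normal approximation then says nothing about which tail is larger. (The inequality is still true in that regime, but it must be proved by a ratio or large-deviation argument --- which is in effect what the paper gestures at: it invokes tightness of the Chernoff bound of Lemma~\ref{l.al} and compares the surrogate exponents $e^{-(np-k)^2/(2np)}$ versus $e^{-(np-k)^2/np}$.) So your CLT argument covers only the central regime $np-k=\Theta(\sqrt n)$, i.e.\ $\alpha$ bounded away from $0$. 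That happens to be the regime in which the proposition is actually used (critical positions, where $\alpha\approx F/Q$), and there your argument is arguably \emph{more} rigorous than the paper's own sketch, since comparing two upper bounds, as the paper does, never by itself compares the underlying quantities. But as a proof of the stated proposition it is incomplete: you need a second, separate argument (e.g.\ comparison of Cram\'er/Chernoff exponents, which are tight on the exponential scale) to handle $np-k\gg\sqrt n$.
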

\begin{proof}

For $\gamma\sim B(n,p)$ if $ p<1-\frac{1}{n}$, then $\frac{1}{4}<{\rm Pr}(\gamma\leq np)$~\cite{GREENBERG201491}. Therefore, when $\frac{1}{4}\geq \frac{F}{Q}$, then $k<np$.
Further, we note that Lemma \ref{l.al} is tight for large enough $np$. Hence, it suffices to prove the proposition for the upper bound 
$e^{-\frac{\left( np-k \right)^2}{2np}}$ for which the statement clearly holds.
\end{proof}

Finally, we report a result that strengthens the second part of Proposition \ref{p.aa} for $w= 2$. 


\begin{theorem}
\label{c.aa}
$\alpha(p,n,k)\geq \alpha(p,2n,2k)$ for $1 \leq k<np-p$. 
\end{theorem}
The proof can be found in Appendix \ref{app: sameer proof}.

\subsubsection{Single Sorting Instance}
\label{sub.g}

We start by analysing the \ifinesname{1} game, which is equivalent to one \roundname. In this case, when an agent is sufficiently far from the start of \agentset, it is beneficial to pay nothing, while near the start it is beneficial to pay and avoid paying $Q$. The boundary between the two will prove important.

\begin{definition}[Critical strategy]
\label{def.crit}
Let $r>0$ be the smallest integer such that $\alpha(p,r,k)Q\leq F$. Then $r$ is called {\em critical position}.

The {\it critical strategy} is
\begin{equation}
\label{eq: 1-crit strategy}
    \critstrategy(n_a, t_a, m_a) = 
    \begin{cases}
    \sigma^F & {\rm if}\ \alpha(p,n_a,k)Q > F,\\
    \sigma^0 & {\rm otherwise}.
    \end{cases}
\end{equation}
\end{definition}
We note that $t_a=1$ and $m_a=0$ $\forall a\in\agentset$ for \ifinesname{1}. We will show that \critstrategy is the only equilibrium of the \ifinesname{1}. First, we define $\alpha^{\rm crit}$ as the probability with which an agent is forced to pay $Q$ when all agents follow \critstrategy.
\begin{proposition}
\label{p.crt}
Let $r$ be the critical position. Then if all agents but $a$ follow \critstrategyext{b}, and $a$ uses $\sigma^0$, then $a$ is forced to pay $Q$ w.p.
\begin{equation}
    \alpha^{\rm crit}(p, r, n_a, k) =
    \begin{cases}
    \alpha(p, n_a, k) & {\rm if }\ n_a < r,\\
    \alpha(p, r, k-(n_a-r)) & {\rm otherwise}.
    \end{cases}
\end{equation}
\end{proposition}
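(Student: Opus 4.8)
The plan is to trace exactly which agents land in the first $k$ positions after the sort, since those are precisely the agents forced to pay $Q$ in phase 3(b). First I would record the mechanics of a single \roundname specialized to \ifinesname{1}: since $t_a=1$ and $m_a=0$ at the start, the sort in phase~2 is by the realized payment $\mu_a\in\{0,F\}$, and because it is stable it places every agent who paid $0$ ahead of every agent who paid $F$ while preserving the original positional order within each group. Phase 3(a) removes all agents who paid $F$, so the only candidates for the forced $Q$-payment are the agents who paid $0$, ranked by their original position $n_a$. Hence an agent $a$ who pays $0$ is forced to pay $Q$ if and only if strictly fewer than $k$ agents with smaller position also paid $0$; by Lemma~\ref{lem: irelevant after} only the agents in front of $a$ matter, so it suffices to count the zero-payers among positions $1,\dots,n_a-1$. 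Since an agent who pays $F$ is removed before phase 3(b), the question of being forced to pay $Q$ only arises when $a$ itself pays $0$, which is the event I would condition on throughout.

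Next I would read off the payment distribution of the front agents under \critstrategy. By Definition~\ref{def.crit}, every agent at a position $<r$ intends to pay $F$ and therefore pays $0$ only upon forgetting, i.e.\ independently with probability $p$, whereas every agent at a position $\ge r$ plays $\sigma^0$ and pays $0$ with certainty. Conditioned on $a$ paying $0$, the event ``$a$ is forced to pay $Q$'' thus reduces to ``fewer than $k$ zero-payers lie in front of $a$,'' and the proof is a two-case binomial count.

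For $n_a<r$, all $n_a-1$ agents in front intend to pay $F$, so the number of zero-payers in front is $\mathrm{Binomial}(n_a-1,p)$, and the probability that this count is below $k$ is exactly $\alpha(p,n_a,k)$ by Definition~\ref{def.a}. For $n_a\ge r$, I would split the front into the $r-1$ agents at positions $<r$, contributing $\mathrm{Binomial}(r-1,p)$ zero-payers, and the $n_a-r$ agents at positions $r,\dots,n_a-1$, each of whom pays $0$ deterministically. The threshold condition then becomes $\mathrm{Binomial}(r-1,p)<k-(n_a-r)$, whose probability is $\alpha(p,r,k-(n_a-r))$; when $k-(n_a-r)\le 0$ this evaluates to $0$, matching the intuition that an agent too far back can never reach the first $k$ positions.

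The main obstacle is the bookkeeping in the second case: one must argue that the $n_a-r$ deterministic zero-payers from positions $\ge r$ necessarily occupy slots ahead of $a$ in the sorted order — which the stable sort guarantees, since they share the payment $0$ and have smaller position — and hence shrink the effective capacity for $a$ from $k$ down to $k-(n_a-r)$. Getting the off-by-one counts right (that position $n$ corresponds to $n-1$ tosses in $\alpha$, and that the boundary $n_a=r$ is covered consistently by both branches, each giving $\alpha(p,r,k)$) is the only delicate point; the remainder is a direct translation into the binomial probabilities defining $\alpha$.
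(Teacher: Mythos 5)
Your proof is correct and takes essentially the same approach as the paper's: a two-case count of the zero-payers in front of $a$, giving a $\mathrm{Binomial}(n_a-1,p)$ count when $n_a<r$ and a $\mathrm{Binomial}(r-1,p)$ count shifted by the $n_a-r$ deterministic zero-payers when $n_a\ge r$. Your write-up is in fact more careful than the paper's terse argument (explicit stable-sort bookkeeping, the conditioning on $a$ paying zero, and the boundary cases $n_a=r$ and $n_a\ge r+k$), but the method is the same.
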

\begin{proof}
Fix $a\in\agentset$. When $\alpha(p,n_a,k) > F/Q$ (i.e. $n_a < r$), then agents in front of $a$ pay $F$ and thus $a$ will not pay $Q$ only if enough of them forget. If $n_a \ge r$, then $n_a-r$ agents choose not to pay. Therefore, $a$ only needs $k-(n_a-r)$ of the $r$ agents to forget.
\qed
\end{proof}
Observe that $\alpha^{\rm crit} \le \alpha$, since some agents may choose to pay zero. Also, by Definition~\ref{def.a}, $\alpha^{\rm crit} = 0$ for $n_a > r+k$.

\begin{proposition}
\label{thm.p}
Let $r$ be the critical position and let all agents follow \critstrategy, except for $a\in\agentset$, whose strategy is $\pi_a = (q, 1-q)$. Then the expected payment of $a$ is
\begin{equation}
\label{eq: exp 1 payment}
    (1-p-q)F + (p+q)\alpha^{\rm crit}(p,r,n_a,k)Q.
\end{equation}
\end{proposition}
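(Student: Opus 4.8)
The plan is to compute $a$'s expected terminal payment by conditioning on the three mutually exclusive outcomes of the sampling step that determine $m_a$: (i) $a$ pays the fine $F$, (ii) $a$ pays nothing and is forced into the legal process (and so pays $Q$), and (iii) $a$ pays nothing and is not forced. Since $F=1$, the realized payment $\mu_a$ is either $0$ or $F$; combining the ignorance probability $p$ with the mass the mixed strategy $\pi_a=(q,1-q)$ places on $\sigma^0$, agent $a$'s realized payment is $0$ with total probability $p+q$ and is $F$ with probability $1-p-q$. These are exactly the coefficients appearing in the statement, so the remaining work is to show that each outcome contributes the claimed amount in expectation.

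The core of the argument is to pin down the payment attached to each outcome. If $a$ pays $F$, then after the update $m_a = F \ge F$, so $a$ is removed in sub-phase 3(a), \emph{before} the first-$k$ selection of sub-phase 3(b); hence $a$ never incurs $Q$ and contributes exactly $F$. If instead $a$ pays $0$, he survives sub-phase 3(a) and is forced to pay $Q$ precisely when his rank among the zero-payers is at most $k$, equivalently when fewer than $k$ of the agents ahead of him also paid $0$. Here I would invoke Lemma~\ref{lem: irelevant after} together with Proposition~\ref{p.crt}: the stable sort keeps the zero-paying agents in their original relative order and removes the fine-payers, so whether $a$ lands in the first $k$ depends only on the realized actions of the $n_a-1$ agents in front of him, all of whom follow \critstrategy, and this probability is exactly $\alpha^{\rm crit}(p,r,n_a,k)$. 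Thus outcome (ii) contributes $Q$ and outcome (iii) contributes $0$.

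Summing the contributions weighted by the three probabilities yields $(1-p-q)F + (p+q)\,\alpha^{\rm crit}(p,r,n_a,k)\,Q$, the claimed value. The step I would be most careful about is the conditional independence used above: I must confirm that $a$'s own deviation cannot alter the distribution of how many front agents pay $0$. This is guaranteed because the front agents' strategies under \critstrategy are fixed and, by Lemma~\ref{lem: irelevant after}, within a single \roundname an agent's behaviour has no effect on those ahead of him; hence $\alpha^{\rm crit}(p,r,n_a,k)$ is the correct forcing probability for \emph{every} realization in which $a$ pays $0$, whether that zero came from forgetting or from actively choosing $\sigma^0$.
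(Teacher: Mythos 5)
Your proposal is correct and follows essentially the same argument as the paper's own proof: condition on whether $a$'s realized payment is $F$ (probability $1-p-q$) or $0$ (probability $p+q$), and in the latter case apply the forcing probability $\alpha^{\rm crit}(p,r,n_a,k)$ from Proposition~\ref{p.crt}. Your version is simply more explicit — spelling out the removal sub-phases and justifying via Lemma~\ref{lem: irelevant after} that $a$'s deviation does not affect the front agents' behaviour — details the paper's two-sentence proof leaves implicit.
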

\begin{proof}
By definition of $\pi_a$, $a$ pays $F$ w.p. $1-p-q$ and he does not forget. If he does, or pays zero w.p. $q$, he is forced to pay $Q$ w.p. $\alpha^{\rm crit}(p,r,n_a,k)$. 
\qed
\end{proof}

\begin{corollary}
\label{cor: exp payment}
Let $r$ be the critical position and let all agents follow \critstrategy. Then the expected payment of $a\in\agentset$ is
\begin{equation}
    G_a^1(p,n_a,k) =
    \begin{cases}
    (1-p)F + p\alpha^{\rm crit}(p,r,n_a,k)Q,& {\rm if}\ n_a< r,\\
    \alpha^{\rm crit}(p,r,n_a,k)Q, & {\rm otherwise}.
    \end{cases}
\end{equation}
\end{corollary}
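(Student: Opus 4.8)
The plan is to obtain $G_a$ as a direct specialization of Proposition~\ref{thm.p}, since the corollary merely asks for the expected payment in the case where the agent $a$ \emph{also} conforms to \critstrategy rather than deviating. Proposition~\ref{thm.p} already supplies the expected payment of an agent playing $\pi_a=(q,1-q)$ against all others following \critstrategy, namely equation~\eqref{eq: exp 1 payment}, and by Lemma~\ref{lem: irelevant after} the forced-$Q$ probability $\alpha^{\rm crit}(p,r,n_a,k)$ depends only on the agents in front of $a$. Hence all that remains is to substitute the value of $q$ that makes $a$'s realized play coincide with \critstrategy in each of the two regimes $n_a<r$ and $n_a\ge r$.

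First I would recall, from the proof of Proposition~\ref{thm.p}, how $q$ enters the realized payment: $a$ pays $F$ with probability $1-p-q$, while the complementary event (forgetting, w.p.\ $p$, or deliberately paying zero, w.p.\ $q$) occurs with probability $p+q$ and leads to paying $Q$ with probability $\alpha^{\rm crit}(p,r,n_a,k)$. Thus $q$ is the extra mass beyond ignorance that $a$ places on paying zero, and $p+q$ is the total probability of paying zero. I would then fix $q$ per case. When $n_a<r$, \critstrategy prescribes $\sigma^F$, so $a$ places no deliberate mass on zero and $q=0$; substituting into~\eqref{eq: exp 1 payment} gives $(1-p)F+p\,\alpha^{\rm crit}(p,r,n_a,k)Q$, the first branch. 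When $n_a\ge r$, \critstrategy prescribes $\sigma^0$, so $a$ deterministically pays zero, forcing $p+q=1$ and hence $q=1-p$; substituting yields $0\cdot F+1\cdot\alpha^{\rm crit}(p,r,n_a,k)Q=\alpha^{\rm crit}(p,r,n_a,k)Q$, the second branch.

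The only place requiring care, and the main (mild) obstacle, is the second substitution: because the parametrization in Proposition~\ref{thm.p} folds the ignorance probability additively into the ``pays zero'' event, a deterministic pay-zero choice corresponds to $q=1-p$ rather than $q=1$. Keeping this bookkeeping straight is exactly what makes the $F$-term vanish and leaves $\alpha^{\rm crit}(p,r,n_a,k)Q$; once the two values of $q$ are correctly identified, both branches follow immediately from the already-established formula.
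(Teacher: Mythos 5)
Your proposal is correct and matches the derivation the paper intends: the corollary is stated without proof precisely because it is the direct specialization of Proposition~\ref{thm.p}, obtained by substituting into equation~(\ref{eq: exp 1 payment}) the values of $q$ that realize \critstrategy in each regime. Your substitutions ($q=0$ for $n_a<r$, and $q=1-p$ for $n_a\ge r$ so that the total pay-zero mass $p+q$ equals $1$ under the paper's additive parametrization) are exactly right, and the bookkeeping point you flag is the only subtlety.
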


\begin{theorem}
\label{thm.ign}
The strategy \critstrategy is unique equilibrium of \ifinesname{1}.
\end{theorem}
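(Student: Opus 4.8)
The plan is to split ``unique equilibrium'' into two claims and prove them separately: (i) \critstrategy is an equilibrium, and (ii) every equilibrium must coincide with \critstrategy. Both rest on the fact that an agent's expected payment is affine in his own deviation probability, which Proposition~\ref{thm.p} already provides, together with the locality supplied by Lemma~\ref{lem: irelevant after}.

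For claim (i) I would fix $a$ and read off the expected payment of a deviation $\pi_a=(q,1-q)$ from \eqref{eq: exp 1 payment}. Rewriting it as $(1-p)F+p\,\alpha^{\rm crit}(p,r,n_a,k)Q + q\big(\alpha^{\rm crit}(p,r,n_a,k)Q - F\big)$ exhibits it as affine in $q$ with slope $\alpha^{\rm crit}(p,r,n_a,k)Q-F$, and the two pure strategies $\sigma^F$ (at $q=0$) and $\sigma^0$ (at the maximal $q$) sit at the endpoints. Since under \critstrategy every agent in front of $a$ follows the critical pattern, Proposition~\ref{p.crt} and Corollary~\ref{cor: exp payment} give the realized value of $\alpha^{\rm crit}$. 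For $n_a<r$, Definition~\ref{def.crit} yields $\alpha(p,n_a,k)Q>F$, so the slope is positive and the payment is minimized at $q=0$, i.e.\ at $\sigma^F$; for $n_a\ge r$ we have $\alpha^{\rm crit}(p,r,n_a,k)Q\le\alpha(p,r,k)Q\le F$, so the slope is nonpositive and the minimum is attained at the maximal $q$, i.e.\ at $\sigma^0$. In each case the minimizer is exactly \critstrategy, so no agent profits from deviating and \critstrategy is an equilibrium.

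For claim (ii) I would induct on the position $n$, sweeping from the front of \agentset backward, and show that in \emph{every} equilibrium agent $n$ plays \critstrategy. Lemma~\ref{lem: irelevant after} is the engine: the payment of agent $n$ depends only on agents $1,\dots,n-1$, so once the inductive hypothesis pins their actions to the critical pattern, agent $n$ faces precisely the fining probability $\alpha^{\rm crit}(p,r,n,k)$ of Proposition~\ref{p.crt}. The base case $n=1$ is immediate: with nobody in front, $\alpha^{\rm crit}=\alpha(p,1,k)=1$ and $Q>F$, so $\sigma^F$ is the strict best response. For the inductive step the same slope computation shows the best response is \emph{strict} — $\sigma^F$ for $n<r$, since $\alpha(p,n,k)Q>F$, and $\sigma^0$ for $n>r$, since $\alpha(p,r,k-(n-r))Q<\alpha(p,r,k)Q\le F$, where the strict inequality uses that $\alpha$ strictly decreases as its threshold drops when $p\in(0,1)$. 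A strict best response forces agent $n$'s action, which closes the induction and identifies the equilibrium uniquely with \critstrategy.

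The delicate point, and the one I expect to be the main obstacle, is the boundary position $n=r$ in the knife-edge case $\alpha(p,r,k)Q=F$. There the slope vanishes, agent $r$ is exactly indifferent between $\sigma^0$ and $\sigma^F$, and one checks directly that ``agents $1,\dots,r$ pay $F$ and the rest pay $0$'' is then a second equilibrium, so strict uniqueness genuinely fails on this measure-zero parameter set. I would therefore state the result under the generic assumption $\alpha(p,r,k)Q<F$, under which the slope at $n=r$ is strictly negative and the induction above goes through verbatim; away from this knife edge every inequality invoked is strict, the best response at each position is unique, and the equilibrium is forced to be \critstrategy.
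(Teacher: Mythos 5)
Your proof is correct and follows essentially the same route as the paper's: a front-to-back induction over \agentset in which Lemma~\ref{lem: irelevant after} localizes each agent's incentives to the agents in front of him, and the expected payment of Proposition~\ref{thm.p}, affine in the deviation probability $q$, is minimized by \critstrategy. The one substantive difference is your knife-edge caveat, and it is a genuine refinement rather than a flaw: when $\alpha(p,r,k)Q = F$ exactly, the agent at the critical position $r$ is indifferent, so the profile in which agents $1,\dots,r$ pay $F$ and the rest pay $0$ is a second equilibrium (as is any mixture by agent $r$, since the agents behind him then still face a fining probability at most $\alpha(p,r,k) = F/Q$ and keep $\sigma^0$ as a best response), so the uniqueness asserted by the theorem genuinely fails there. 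The paper's proof glosses over this by asserting that \critstrategy ``minimizes'' the expected payment without noting that the minimizer need not be unique, so your genericity assumption $\alpha(p,r,k)Q < F$ (or some tie-breaking convention) is in fact needed for the statement as written.
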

\begin{proof} 
 Consider $a\in\agentset$ in the sorted order. We will show by induction \critstrategy is a unique best-response to strategies of agents in front of a given agent. For the first agent, \critstrategy clearly maximizes the utility $-G_a$ of $a$. In the induction step we assume $a'$ in front of $a$ follow \critstrategy. Following Lemma~\ref{lem: irelevant after}, the actions of the others can be arbitrary. Observe the \critstrategy minimizes the expected payment (\ref{eq: exp 1 payment}). Thus $a$ wants to follow \critstrategy.
 \qed
\end{proof}

\subsubsection{More Sorting Instances}
\label{sub.ex}
In this section we present analytic solution of the general \ifinesname{w} game, $w\geq 1$. We start by defining extension of \critstrategy, and showing no agent can benefit by deviating from it. Later, we discuss some properties of this analytic solution.

In \ifinesname{w}, no agents are added after sorting. After the first \roundname the game is thus identical to \ifinesname{(w-1)}. This recursive relation motivates us to introduce the analogues of the variables used in the previous section recursively. We use upper index to denote the game length $w$ and number of \roundname, i.e. in the previous section we would use $r^{1,1}$ for the critical position $r$. 

\label{sub.ex}
We extend Definition~\ref{def.crit} of critical strategy to pay $F$ if $a$'s position is in front of some critical position $r^{w,t}$, defined below. Note that since the second \roundname corresponds to \ifinesname{(w-1)}, $r^{w,l} = r^{w-1,l-1}$ for $l>1$ and in particular $r^{w,w} = \dots = r^{2,2} = r^{1,1}=r$.
\begin{definition}[w-Critical strategy]
\label{def.cr2}
The {\it $w$-critical strategy} is 
\begin{equation}
\label{eq: t-crit strategy}
    \icritstrategy{w}(n_a, t_a, m_a) = 
    \begin{cases}
    \sigma^F & {\rm if}\ n_a < r^{w,t_a},\\
    \sigma^0 & {\rm otherwise}.
    \end{cases}
\end{equation}
\end{definition}
Let all agents follow $\icritstrategy{w}$. Then if $w>1$ and $a\in\agentset^1$ does not terminate in the first \roundname, his expected payment in the remaining $w-1$ rounds is
\begin{equation}
    \mathcal{G}_a^w(p, n_a, k) = \mathbb{E}_{\gamma\sim B(\min(n_a,r^{w,1})-1, 1-p)}[G^{w-1}_a(p, n_a-\gamma-k, k)],
\end{equation}
where $G^{w-1}_a$ is the recursive extension of the expected payment $G^1_a$ (see Corollary~\ref{cor: exp payment}). A formula for  $G^w_a$ is given in Proposition \ref{p:exp.payment} below.

In words, since all agents positioned in front of $\min(n_a,r^{w,1})$ want to pay $F$, $a$'s position decreases by $\gamma+k, \gamma\sim B(\min(n_a,r^{2,1})-1,1-p)$. At the new position, $a$ is expected to pay $G^{w-1}_a$.

\begin{proposition}
\label{p:exp.payment}
Let all agents follow \icritstrategy{w}, and $w>1$. Then the expected payment of an agent $a\in\agentset$ is
\begin{equation}
    G_a^w(p,n_a,k) =
    \begin{cases}
    (1-p)F + pX^w(p,r^{w,1},n_a,k),& {\rm if}\ n_a< r^{w,1},\\
    X^w(p,r^{w,1},n_a,k), & {\rm otherwise},
    \end{cases}
    \end{equation}
    where 
    $$
    X^w(p,r^{w,1},n_a,k)= 
    $$
    $$
    \alpha^{\rm crit}(p,r^{w,1},n_a,k)Q+ 
    (1-\alpha^{\rm crit}(p,r^{w,1},n_a,k))\mathcal{G}_a^w(p,n_a, k)
    $$
    is a's expected payment if he does not pay $F$ in the first Round.
\end{proposition}
It remains to determine critical positions $r^{w,l}$. Recursively, $r^{w,l}= r^{w-1,l-1}$ for $l> 1$. Hence it remains to define $r^{w,1}$. Similarly to Definition~\ref{def.crit}, we define the critical position in the first \roundname as the smallest $r^{w,1}\in\naturals$ such that $\alpha(p, r^{w,1}, k)Q + (1-\alpha(p, r^{w,1}, k))\mathcal{G}_a^w(p, r^{w,1}, k)\le F$. 

In words, assume all agents in front of $a$ want to pay $F$. In the first \roundname, if $a$ pays zero he risks paying $Q$ w.p. $\alpha$ and the expected payment in the remaining rounds w.p. $1-\alpha$. The critical position $r^{w,1}$ is the smallest position $n_a$ at which, assuming all agents in front of it try to pay $F$, it is beneficial to pay zero.

\begin{lemma}
\label{lemma: crit pos grow}
Let $w>1$. Then $r^{w,1} \ge r^{w-1,1} + k$.
\end{lemma}
\begin{proof}
By definition, $r^{w,1}$ is the smallest integer such that
$\alpha(p, r^{w,1}, k)Q + (1-\alpha(p, r^{w,1}, k))\mathcal{G}_a^w(p, r^{w,1}, k)\le F$. For a contradiction we assume that $r^{w,1} < r^{w-1,1} + k$. It suffices to show that
$\mathcal{G}_a^w(p, r^{w,1}, k) > F$ since this inequality along with $Q> F$ violates the defining property of $r^{w,1}$.

If $r^{w,1}-k < r^{w-1,1}$ then for each $\gamma\geq 0$, 
$$
G^{w-1}_a(p, r^{w,1} -\gamma-k, k)= (1-p)F + pX^{w-1}(p,r^{w-1,1},r^{w,1} -\gamma-k,k),
$$
see Proposition \ref{p:exp.payment}.
Moreover, by the defining property of $r^{w-1,1}$
$$
X^{w-1}(p,r^{w-1,1},r^{w,1} -\gamma-k,k)> F.
$$
Hence for each $\gamma\geq 0$, 
$$
G^{w-1}_a(p, r^{w,1} -\gamma-k, k)> F
$$
and thus $\mathcal{G}^w_a(p, r^{w,1}, k) > F$.
\qed
\end{proof}

We are now ready to show the main result of this section.

\begin{theorem}[Equilibrium of \ifinesname{w}]
\label{thm: w equilibrium} \icritstrategy{w} is unique equilibrium of \ifinesname{w}.
\end{theorem}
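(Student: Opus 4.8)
The plan is to mirror the proof of Theorem~\ref{thm.ign}, combining a backward step across the two \roundname instances with a forward induction on the sorted position inside the first \roundname. First I would pin down the second-\roundname behaviour. Once the first \roundname is played and its terminal agents removed, every surviving agent has $m_a=0$ and faces a configuration that is exactly \ifinesname{1}; by Theorem~\ref{thm.ign} the unique equilibrium of that subgame is \critstrategy, which is precisely $\icritstrategy{2}$ restricted to $t_a=2$, since $r^{2,2}=r^{1,1}=r$. Hence in any equilibrium of \ifinesname{2} the second-\roundname play is forced, and an agent who survives the first \roundname from position $n_a$ has expected continuation payment exactly $\mathcal{G}_a^2(p,n_a,k)$.

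Next I would analyse the first \roundname by forward induction on the position $n_a$, invoking Lemma~\ref{lem: irelevant after} so that only the agents in front of $a$ influence his expected payment, which makes the induction well-founded exactly as in Theorem~\ref{thm.ign}. In the inductive step I assume every agent ahead of $a$ follows $\icritstrategy{2}$ and compare $a$'s two options: paying $F$ yields terminal payment $F$, while paying $0$ yields $Q$ with the probability $\alpha^{\rm crit}$ that $a$ lands among the first $k$ in this \roundname (computed as in Proposition~\ref{p.crt}, accounting for the agents beyond $r^{2,1}$ that also pay nothing) and the continuation value $\mathcal{G}_a^2$ otherwise. By the very definition of $r^{2,1}$ this pay-$0$ cost drops to at most $F$ precisely at position $r^{2,1}$, so the threshold form of $\icritstrategy{2}$ is a best response, and strictness of the inequality off the threshold gives uniqueness.

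The crux, and the step I expect to be the main obstacle, is certifying that $r^{2,1}$ genuinely acts as a threshold, i.e. that the combined two-\roundname cost of paying $0$ is monotone in $n_a$ and lies strictly above $F$ for every $n_a<r^{2,1}$. The delicate point is the coupling between the instances: a first-\roundname deviation shifts $a$'s second-\roundname position by $\gamma+k$ with $\gamma\sim B(n_a-1,1-p)$, and I must ensure these shifted positions still fall in the region where the continuation payment exceeds $F$. This is exactly where Lemma~\ref{lemma: crit pos grow} enters: the bound $r^{2,1}\ge r^{2,2}+k$ guarantees that an agent ahead of $r^{2,1}$ who deviates lands, after removal of the $k$ forced payers, at a second-\roundname position still in front of $r=r^{2,2}$, whence $\mathcal{G}_a^2>F$ and paying $F$ in the first \roundname remains strictly optimal. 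Combining this with the monotonicity of $\alpha$ and of $\mathcal{G}_a^2$ in $n_a$ closes the induction and shows $\icritstrategy{2}$ is the unique equilibrium.
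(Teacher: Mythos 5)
Your skeleton matches the paper's: the second \roundname reduces to \ifinesname{1}, whose unique equilibrium is given by Theorem~\ref{thm.ign}, and the first \roundname is handled by induction over sorted positions via Lemma~\ref{lem: irelevant after}, comparing the cost $F$ of paying against the cost $\alpha^{\rm crit}Q+(1-\alpha^{\rm crit})\mathcal{G}_a^2$ of paying zero. However, the step you single out as the crux is argued incorrectly, and as written it would fail. You invoke Lemma~\ref{lemma: crit pos grow}, i.e. $r^{2,1}\ge r^{2,2}+k$, to conclude that an agent at position $n_a<r^{2,1}$ who pays zero and survives lands at a second-\roundname position still in front of $r^{2,2}$, whence $\mathcal{G}_a^2>F$. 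The inequality points the wrong way for that containment: the surviving deviator lands at $n_a-\gamma-k$, which in the worst case ($\gamma=0$) is $n_a-k$, and from $n_a<r^{2,1}$ together with $r^{2,1}-k\ge r^{2,2}$ you can only conclude $n_a-k<r^{2,1}-k$, a bound that sits \emph{at or behind} $r^{2,2}$. You would need the reverse bound $r^{2,1}\le r^{2,2}+k$ for your argument; indeed, the paper's proof of Lemma~\ref{lemma: crit pos grow} runs exactly your containment argument, but under the contradiction hypothesis $r^{2,1}<r^{2,2}+k$. Consequently your intermediate claim, $\mathcal{G}_a^2(p,n_a,k)>F$ for all $n_a<r^{2,1}$, is unsubstantiated --- and it need not hold, since near $r^{2,1}$ the definition only forces the mixture $\alpha Q+(1-\alpha)\mathcal{G}_a^2$ above $F$, not $\mathcal{G}_a^2$ itself.

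The repair is simpler and is what the paper actually does: for $n_a<r^{2,1}$ every agent in front attempts to pay $F$, so $\alpha^{\rm crit}(p,r^{2,1},n_a,k)=\alpha(p,n_a,k)$ (Proposition~\ref{p.crt}), and the expected cost of paying zero is exactly $\alpha(p,n_a,k)Q+(1-\alpha(p,n_a,k))\mathcal{G}_a^2(p,n_a,k)$, which exceeds $F$ directly by the \emph{minimality} of $r^{2,1}$ in its definition. No coupling lemma is needed, and neither is the monotonicity of $\mathcal{G}_a^2$ in $n_a$ that you assert without proof. Lemma~\ref{lemma: crit pos grow} plays no role in the paper's equilibrium argument; it is used only afterwards, for the revenue bound in the Division problem.
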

\begin{proof}
We proceed by induction on $w$. For $w=1$ we use Theorem~\ref{thm.ign}. After the first Round, the game corresponds to \ifinesname{(w-1)} and there is a unique equilibrium by the induction assumption. In the first \roundname, we can use a modification of proof of Theorem~\ref{thm.ign}: 
consider agents of $\agentset^1$ in the sorted order and use induction over agents. For an agent $a\in\agentset^1$ let his strategy be $\pi_a = (q, 1-q)$ in the first Round, he follows \icritstrategy{w} from the second Round, and let all agents in front of him follow \icritstrategy{w}. Then his expected payment is
\begin{equation}
\label{eq: rec exp payment}
(1-p-q)F + (p+q)X^w(p,r^{w,1},n_a,k),
\end{equation}

This is because w.p. $1-p-q$ he pays $F$ and leaves. Otherwise, since all agents in front of him follow \icritstrategy{w}, and he also follows \icritstrategy{w} from the second Round, his expected payment is $X^w(p,r^{w,1},n_a,k)$.

Strategy \icritstrategy{w} is chosen to minimize $a$'s expected payment (\ref{eq: rec exp payment}).
Therefore, $a$ will follow it even in the first \roundname.
\qed
\end{proof}

\begin{proposition}
\label{p.tot}
Let $w> 0$ be an integer and let all players follow \icritstrategy{w}. Then the total expected payment of \ifinesname{w} is 
\begin{equation}
\label{eq: 2 exp rev}
   wkQ+ F(1-p)\sum_{t=1}^w(r^{t,1}-1).
\end{equation}
\end{proposition}
\begin{proof}
In the first \roundname, $(1-p)(r^{w,1}-1)$ agents are expected to pay $F$, and $k$ are forced to pay $Q$. In the remaining rounds, the situation is analogous. 
\qed
\end{proof}

\begin{theorem}
\label{thm.avef}
Equilibrium strategies of all \ifinesname{w} exhibit the avalanche effect.
\end{theorem}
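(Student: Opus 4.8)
The plan is to verify the first condition of Definition~\ref{def: avalanche effect}. Since \ifinesname{w} has no entering agents ($x=0$), the second condition is unavailable, so for both games I would show that the expected total terminal payment, viewed as a function of $p$, has a strictly positive right-derivative at $p=0$. The whole argument rests on one observation: as $p\to0^+$ the critical position diverges. Indeed, by Definition~\ref{def.crit} the critical position $r$ is the smallest $n$ with $\alpha(p,n,k)\le F/Q$, and since $\alpha(p,n,k)\to1$ pointwise as $p\to0^+$ while $F/Q<1$ (because $Q>F$), for every fixed $n$ one has $r(p)>n$ once $p$ is small enough. Hence there is $p_0>0$ with $r(p)>x_0$ for all $p\in(0,p_0)$, and on this interval every one of the $x_0$ agents (all of which terminate) lies strictly in front of the critical position.

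For \ifinesname{1} this means that on $(0,p_0)$ each agent uses the ``pay $F$'' branch of Corollary~\ref{cor: exp payment}, so, using $\alpha^{\rm crit}(p,r,n_a,k)=\alpha(p,n_a,k)$ for $n_a<r$, the expected total terminal payment equals
\begin{equation*}
  \sum_{a\in\terminalset}m_a \;=\; \sum_{a\in\terminalset}\Big[(1-p)F + p\,\alpha(p,n_a,k)\,Q\Big],
\end{equation*}
a polynomial in $p$ (the positions $n_a$ range over $1,\dots,x_0$). Differentiating term by term and letting $p\to0^+$, the explicit factor $p$ kills the contribution of $\alpha'$, and $\alpha(0,n_a,k)=1$, so each summand contributes $-F+Q$. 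This gives $\lim_{p\to0^+}\frac{\rm d}{{\rm d}p}\sum_{a\in\terminalset}m_a = x_0(Q-F)>0$, establishing the avalanche effect for \ifinesname{1}.

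For \ifinesname{2} I would argue identically. Lemma~\ref{lemma: crit pos grow} gives $r^{2,1}\ge r^{2,2}+k$, and $r^{2,2}=r^{1,1}$ is exactly the \ifinesname{1} critical position, so both $r^{2,1}$ and $r^{2,2}$ diverge as $p\to0^+$; thus on a right-neighbourhood of $0$ all agents lie in front of both critical positions, use the ``pay $F$'' branch in the first \roundname, and the total terminal payment is $\sum_{a\in\terminalset}G_a^1(p,n_a,k)$ with $G_a^1$ from (\ref{eq: rec exp payment}) at $q=0$. The only new feature is the carry-over term $p\,(1-\alpha^{\rm crit})\,\mathcal{G}_a^2$; but $\mathcal{G}_a^2$ is again a polynomial in $p$ (a finite binomial mixture of the polynomials $G_a$), hence bounded with bounded derivative near $0$, and its contribution to the derivative at $p=0$ vanishes because of the explicit factor $p$ together with $1-\alpha^{\rm crit}(0,\cdot)=0$. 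The remaining terms reproduce $-F+Q$ per agent, so again $\lim_{p\to0^+}\frac{\rm d}{{\rm d}p}\sum_{a\in\terminalset}m_a = x_0(Q-F)>0$.

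The main obstacle, and the step I would write most carefully, is the clean passage to the limit: I must confirm that on a fixed right-neighbourhood of $0$ a single closed-form (polynomial) expression governs the total payment of all agents simultaneously --- this is precisely what divergence of the critical position(s) buys --- so that the one-sided derivative is just the derivative of that polynomial and no regime-change boundary is crossed as $p\to0^+$. The secondary technical point, handling the carry-over term $\mathcal{G}_a^2$ in the \ifinesname{2} case, is routine once its polynomial nature and the vanishing prefactor are noted.
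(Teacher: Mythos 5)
Your proof is correct and follows essentially the same route as the paper's: both rest on the observation that $\alpha(p,n,k)\to 1$ as $p\to 0^+$ together with Lemma~\ref{lemma: crit pos grow}, which forces the critical positions of both \ifinesname{1} and \ifinesname{2} to diverge, so that for all sufficiently small $p$ every agent lies in front of them and pays according to the ``pay $F$'' branch. The paper stops at ``the strategies converge to $\sigma^F$'' and asserts the avalanche effect, whereas you additionally carry out the explicit verification that $\lim_{p\to0^+}\frac{\rm d}{{\rm d}p}\sum_{a\in\terminalset}m_a = x_0(Q-F)>0$ on a fixed polynomial regime $(0,p_0)$, including the vanishing carry-over term $p\left(1-\alpha^{\rm crit}\right)\mathcal{G}_a^2$ --- a step the paper leaves implicit, and which you handle correctly.
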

\begin{proof}
Since $\lim_{p\to0^+}\alpha(p,n,k)=1$ and $Q>F$, the critical position in the last \roundname $r^{w,w}\to\infty$. Using Lemma~\ref{lemma: crit pos grow}, the equilibrium strategies of all \ifinesname{w} satisfy $\icritstrategy{w}\to\sigma^F$ $\forall w\geq 1$. Thus, \icritstrategy{w} satisfies Definition~\ref{def: avalanche effect}.  
\qed
\end{proof}

In this simplified model, decreasing the probability of ignorance virtually increases the number of state employees assigned to processing the fines. 
This allows the central authority to increase the total payment through advertising, rather than hiring additional employees, which may be much cheaper.
We show in Section~\ref{sec: experiments} that these results translate well to a more general case where non-zero number of agents enter the system in each \roundname.

\subsubsection{Division problem}

To give a partial answer to the Division problem in this setting, we will compare the total expected payment of \ifinesname{w} with $k$, and \ifinesname{1} with $wk$.

\begin{theorem}
\label{thm.divv}
Let $Q\gg F$. Then the equilibrium strategy of \ifinesfull{w} achieves a higher total payment than the equilibrium of \ifinesfullk{1}{wk} in expectation by at least $F(1-p)w[k(w-1)-1]$.
\end{theorem}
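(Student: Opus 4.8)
The plan is to reduce the whole comparison to a single inequality between critical positions. The key observation is that both games force a total of exactly $2k$ agents to pay $Q$: in \ifinesfullk{1}{2k} this happens in its single \roundname, whereas in \ifinesfull{2} it happens across two \roundname with $k$ agents each. Hence the two totals contain the identical term $2kQ$, which cancels, and the comparison collapses to the fine contributions alone.

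First I would write the two totals explicitly. Let $r := r^{2,2} = r^{1,1}$ be the critical position of the last \roundname (equivalently, of \ifinesname{1} with parameter $k$), and let $s$ be the critical position of \ifinesfullk{1}{2k}, i.e. the smallest $s$ with $\alpha(p,s,2k)Q \le F$. By the preceding proposition, the total expected payment of \ifinesfull{2} under \icritstrategy{2} equals $F(1-p)(r^{2,1}+r-2) + 2kQ$; by the same single-\roundname accounting applied with $k$ replaced by $2k$ (whose equilibrium is again the critical strategy by Theorem~\ref{thm.ign}), the total of \ifinesfullk{1}{2k} equals $F(1-p)(s-1) + 2kQ$. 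After cancelling $2kQ$ it remains to prove
\begin{equation*}
    r^{2,1} + r - 2 \ \ge\ s - 1.
\end{equation*}

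Next I would bound the two sides using results already established. Lemma~\ref{lemma: crit pos grow} gives $r^{2,1} \ge r + k$, so the left-hand side is at least $2r + k - 2$. For the right-hand side I would invoke the second claim of Proposition~\ref{p.aa}: since $Q \gg F$ makes $F/Q$ small and hence forces $r$ to be large, the position $2r$ lies in the ``large enough'' regime and $\alpha(p, 2r, 2k) \le \alpha(p, r, k) \le F/Q$. Thus $2r$ already satisfies the defining inequality for the critical position with parameter $2k$, whence $s \le 2r$. Combining, $r^{2,1} + r - 2 \ge 2r + k - 2 \ge s + k - 2 \ge s - 1$, where the last step uses $k \ge 1$; this is exactly the desired inequality and completes the argument.

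The main obstacle is making the hypotheses of Proposition~\ref{p.aa} genuinely hold: the comparison $\alpha(p,n,k) \ge \alpha(p,2n,2k)$ is only asserted for $n$ ``large enough,'' so I must argue that $Q \gg F$ really does push the critical position $r$ into that regime, via $\alpha(p,r,k) \le F/Q \to 0$ together with the monotonicity of $\alpha$ in $n$. A secondary point to verify is that exactly $2k$ agents pay $Q$ in each game, i.e. that enough agents choose $\sigma^0$ for the count of $Q$-payers not to be truncated; this follows whenever $x_0$ is large relative to the critical positions, which I would record as a standing assumption.
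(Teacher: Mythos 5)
Your proposal is correct and follows essentially the same route as the paper: cancel the common $2kQ$ term, reduce the comparison to the critical positions via Lemma~\ref{lemma: crit pos grow} (which is exactly how Eq.~(\ref{eq: 2 exp rev}) is obtained), and close the argument with Proposition~\ref{p.aa} to get $\alpha(p,2r,2k)\le\alpha(p,r,k)\le F/Q$ and hence $r^{1,1}(2k)\le 2r^{2,2}(k)$. If anything, you are more careful than the paper, which leaves implicit both the monotonicity-in-$n$ step giving $s\le 2r$ and the justification that $Q\gg F$ pushes $r$ into the ``large enough'' regime required by Proposition~\ref{p.aa}.
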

\begin{proof}
By Proposition \ref{p.tot} and Lemma \ref{lemma: crit pos grow}, the expectation of the total payment of \ifinesfullk{w}{k} is at least 
$$
   wkQ+ F(1-p)\sum_{t=1}^w(r^{1,1}(k)-1)+ (t-1)k, 
$$
while the expectation of the total payment of \ifinesfullk{1}{wk} is  
$$
   wkQ+ F(1-p)(r^{1,1}(wk)-1). 
$$
To finish the proof, we note that by Proposition \ref{p.aa}, if $Q\gg F$ then $wr^{1,1}(k)\geq r^{1,1}(wk)$.
\qed
\end{proof}

XXXXXXXXXXXXXXXXXXXXXXX

\begin{definition}
\label{def.exppp}
For $m<n-k$, we let $\mathcal G_1^m(p,n,k)$ be the expectation of the payoff at position $n(m)$ of the Gangster game when all the players follow ${\mathcal R}_1$, where $n(m)$ is the random variable obtained from $n-k$ by $m$ independent events, each event consisting of subtracting $1$ with probability $1-p$. Hence, the expectation of $n(m)$ is $pm+n-k-m$.
\end{definition}

Next definition is important and quite recursive. For $n=1, \ldots$ we define inductively functions $m(p,n,k)$, $c_2(p,n,k)$, strategy ${\mathcal R}_2$ and critical position $n_2$. 

\begin{definition}
\label{def.2df}
For $n=k+1, \ldots$ we define inductively functions $m= m(p,n,k)$, $c_2(p,n,k)$, strategy ${\mathcal R}_2$ and critical position $r_2$ as follows.
\begin{itemize}
    \item $m(p,k+1,k)= 0$,
    \item 
    $
    c_2(p,n,k)= c_1(p,n,k)+ (1-\alpha'(p,n,k))\mathcal G_1^m(p,n,k),
    $
    \item
    If $c_2(p,n-1,k)> F$ then let $m= m(p,n,k)= n-k-1$,
    \item
    If $c_2(p,n-1,k)\leq F$ then let $r_2$ be smallest such that $c_2(p,r_2,k)\leq F$,
    \item
    If $c_2(p,n-1,k)\leq F$ then $m(p,n,k)= r_2-k-1$,
    \item
    the mixed strategy ${\mathcal R}_2$ for a player $a$ is: in the first round, pay $F$ with probability $1-p$ if $c_2(p,n_a,k)> F$ and pay $0$ otherwise. In the second round, follow mixed strategy ${\mathcal R}_1$.
    \end{itemize}
\end{definition}
\begin{lemma}
\label{l.comp}
$r_1+k\leq r_2.$
\end{lemma}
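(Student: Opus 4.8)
The plan is to argue by contradiction directly from the defining property of the two critical positions in Definitions~\ref{def.exppp} and~\ref{def.2df}. Recall that $r_1$ is the one-round critical position, i.e. the smallest $n$ with $\alpha(p,n,k)Q\le F$ (Definition~\ref{def.crit}), and that $r_2$ is the smallest $n$ with $c_2(p,n,k)\le F$, where $c_2(p,n,k)=\alpha(p,n,k)Q+\bigl(1-\alpha(p,n,k)\bigr)\mathcal{G}_1^m(p,n,k)$. Here I read $c_1(p,n,k)=\alpha(p,n,k)Q$ and $\alpha'=\alpha$: for a position $n\le r_2$, under $\mathcal{R}_2$ every agent ahead of $n$ attempts to pay $F$, so the probability that fewer than $k$ of them forget --- and hence that the agent is forced to pay $Q$ --- is exactly $\alpha(p,n,k)$ by Definition~\ref{def.a}. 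Suppose, for contradiction, that $r_1+k>r_2$, equivalently $r_2-k<r_1$.

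First I would extract an upper bound on the second-round term from the definition of $r_2$. Since $c_2(p,r_2,k)\le F$, the quantity $\alpha(p,r_2,k)Q+\bigl(1-\alpha(p,r_2,k)\bigr)\mathcal{G}_1^m(p,r_2,k)$ is a convex combination of $Q$ and $\mathcal{G}_1^m(p,r_2,k)$ that does not exceed $F$. As $Q>F$, the value $\alpha(p,r_2,k)=1$ is impossible (it would give $c_2=Q>F$), so $\alpha(p,r_2,k)<1$, and rearranging the inequality together with $Q>F$ yields $\mathcal{G}_1^m(p,r_2,k)<F$.

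Next I would derive the opposite strict inequality $\mathcal{G}_1^m(p,r_2,k)>F$ from the assumption $r_2-k<r_1$, producing the contradiction. By Definition~\ref{def.exppp}, $\mathcal{G}_1^m(p,r_2,k)$ is the expectation of the one-round payoff $G_a$ evaluated at the random position $n(m)$ obtained by starting from $r_2-k$ and subtracting a nonnegative number of units; in particular every realization satisfies $n(m)\le r_2-k<r_1$. For any position strictly below $r_1$, the minimality in the definition of $r_1$ gives $\alpha(p,\cdot,k)Q>F$, and there all agents ahead pay $F$, so by Corollary~\ref{cor: exp payment} and Proposition~\ref{p.crt} the one-round payoff equals $(1-p)F+p\,\alpha(p,\cdot,k)Q>(1-p)F+pF=F$. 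Since this strict bound holds for every realization of $n(m)$, averaging preserves it and $\mathcal{G}_1^m(p,r_2,k)>F$, contradicting the previous paragraph. Hence $r_2\ge r_1+k$, as claimed.

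The step I expect to be the main obstacle is not the chain of inequalities, which is short, but the bookkeeping that justifies using the one-round formula with probability $\alpha$ (rather than $\alpha^{\rm crit}$) at every position in the support of $n(m)$. This requires confirming that along the whole range $\{1,\dots,r_2-k\}$ the agents ahead are genuinely all attempting to pay $F$, so that Proposition~\ref{p.crt} collapses $\alpha^{\rm crit}$ to $\alpha$, and that the identification $c_1=\alpha Q$, $\alpha'=\alpha$ underlying $c_2$ is the correct reading of Definition~\ref{def.2df}. Once these notational correspondences are pinned down, the convexity argument and the minimality-based lower bound close the proof.
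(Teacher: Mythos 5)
Your proof is correct and follows essentially the same route as the paper's: both argue by contradiction from the minimality defining $r_2$, observing that if $r_2 < r_1 + k$ then every realized second-round position lies at most $r_2-k$, hence strictly below $r_1$, where the one-round payment equals $(1-p)F + p\,\alpha(p,\cdot,k)Q > F$ by minimality of $r_1$, so $\mathcal{G}_1^{m}(p,r_2,k) > F$, which together with $Q>F$ is incompatible with the defining inequality $c_2(p,r_2,k)\le F$. Your only cosmetic deviation is extracting $\mathcal{G}_1^{m}(p,r_2,k) < F$ from the defining inequality via the convex-combination rearrangement, where the paper simply notes that $\mathcal{G}_1^{m} > F$ and $Q > F$ jointly violate that inequality; the substance, including the identification of $\alpha^{\rm crit}$ with $\alpha$ below the critical position, is identical.
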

\begin{proof}
We first note that by definition, $m(p,r_2,k)= (r_2-k-1)$.

Abusing notation, $r_2$ is smallest such that $$c_1(r_2)+ (1-\alpha'(r_2))\mathcal G_1^{r_2-k-1}(r_2)\leq F.$$

If $r_2< r_1+k$ then $\mathcal G_1^{r_2-k-1}(r_2)> F$ by the definition of $r_1$, contradicting the defining inequality for $r_2$.

\end{proof}
\begin{proposition}
\label{p.aaa}
If $Q>>F$ then $r_1(2k)< 2r_1(k)$.
\end{proposition}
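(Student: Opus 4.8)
The plan is to reduce the statement about critical positions to a pointwise comparison of the function $\alpha$, exactly as the proof of Theorem~\ref{thm.divv} indicates. Recall from Definition~\ref{def.crit} that $r_1(k)$ is the smallest position $r$ with $\alpha(p,r,k)\le F/Q$, and that $\alpha(p,\cdot,k)$ is non-increasing in its position argument: increasing the position by one adds a coin toss, which can only weakly increase the number of heads and hence weakly decrease the probability of seeing fewer than $k$ of them. Because of this monotonicity, establishing $r_1(2k)<2\,r_1(k)$ amounts to exhibiting a position strictly below $2\,r_1(k)$ at which $\alpha(p,\cdot,2k)$ has already dropped to $\le F/Q$. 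Writing $n:=r_1(k)$, the defining property gives $\alpha(p,n,k)\le F/Q$, so it suffices to control $\alpha(p,\cdot,2k)$ near position $2n$.

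The second step is to feed this into Proposition~\ref{p.aa}. The hypothesis $Q\gg F$ guarantees $F/Q\le \tfrac14$, so the standing assumption $\alpha(p,n,k)\le F/Q\le \tfrac14$ of Proposition~\ref{p.aa} holds at $n$; this yields $np>k$ and places us in the large-$n$ regime in which the ``moreover'' clause applies, giving $\alpha(p,2n,2k)\le \alpha(p,n,k)\le F/Q$. Since $2n$ therefore satisfies the threshold condition defining $r_1(2k)$, we immediately obtain the non-strict bound $r_1(2k)\le 2n = 2\,r_1(k)$.

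The remaining and main difficulty is upgrading this to the strict inequality, i.e.\ ruling out $r_1(2k)=2n$; equivalently, showing that the threshold is reached already at position $2n-1$, so that $\alpha(p,2n-1,2k)\le F/Q$. Here I would exploit the tightness of the Chernoff estimate from Lemma~\ref{l.al}, which the proof of Proposition~\ref{p.aa} already relies on: for large $np$ the bound is essentially an equality, so $\alpha(p,2n,2k)\approx e^{-(np-k)^2/(np)}$ is the \emph{square} of $\alpha(p,n,k)\approx e^{-(np-k)^2/(2np)}$. Consequently $\alpha(p,2n,2k)$ is of order $(F/Q)^2$, far below $F/Q$ once $Q\gg F$, and since deleting a single coin toss changes $\alpha$ only by a bounded multiplicative factor, $\alpha(p,2n-1,2k)$ remains below $F/Q$ as well. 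This forces $r_1(2k)\le 2n-1<2\,r_1(k)$. I expect the delicate point to be the bookkeeping that makes ``large enough $n$'' in Proposition~\ref{p.aa} quantitatively compatible with the regime forced by $F/Q$ being small, together with a clean justification of the bounded-factor claim for the single-toss removal; everything else is a direct consequence of the monotonicity of $\alpha$ and the doubling inequality already recorded in Proposition~\ref{p.aa}.
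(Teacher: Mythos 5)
Your proposal takes essentially the same route as the paper's proof, which likewise reduces the statement to the comparison $\alpha(p,r_1,k)\ge\alpha(p,2r_1,2k)$, establishes $r_1(k)\,p>k$ (the paper by contradiction with the definition of $r_1$, you directly via the first part of Proposition~\ref{p.aa} --- the same content), and then invokes Proposition~\ref{p.aa}. Where you go beyond the paper is strictness: the paper's proof passes silently from the non-strict bound $\alpha(p,2r_1,2k)\le\alpha(p,r_1,k)\le F/Q$, which only yields $r_1(2k)\le 2r_1(k)$, to the strict conclusion of the proposition, so your extra step at position $2n-1$ fills a real gap in the paper's own argument. Moreover, your step operates at exactly the level of rigor the paper accepts elsewhere: the proof of Proposition~\ref{p.aa} also replaces $\alpha$ by the Chernoff expression of Lemma~\ref{l.al} and appeals to its (unproved) tightness, and the squaring identity $e^{-(2np-2k)^2/(2\cdot 2np)}=\bigl(e^{-(np-k)^2/(2np)}\bigr)^2$ is precisely what makes the ``moreover'' clause of Proposition~\ref{p.aa} hold. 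The two points you flag as delicate are both easy to discharge: conditioning on the outcome of the last coin toss gives $\alpha(p,m,2k)\le\frac{1}{1-p}\,\alpha(p,m+1,2k)$, so removing a single toss costs at most a factor $\frac{1}{1-p}$, which is absorbed once $F/Q\le 1-p$; and the trivial bound $\alpha(p,n,k)\ge(1-p)^{n-1}$ shows that $Q\gg F$ forces $r_1(k)\to\infty$, so the ``large enough $n$'' proviso of Proposition~\ref{p.aa} is indeed met in the regime you need it.
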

\begin{proof}
It suffices to show that $\alpha(p,r_1,k)\geq \alpha(p,2r_1,2k)$. If $r_1p \leq k$ then $\alpha(p,r_1,k)> F/Q$ contradicting the definition of $r_1$ and thus $r_1p > k$ and the statement follows from Proposition \ref{p.aa}.

\end{proof}

\begin{theorem}
\label{thm.p2}
Let $r_2$ be the critical position and let all players follow ${\mathcal R}_2$ in the 2-Fines game. The expected payoff $G_2(p,r_2,k)$ of player $a$ is: (1) If $n_a< r_2$ then $G_2(p,n_a,k)= (1-p)F+ pc_2(p,n_a,k)$, (2) If $r_2\geq n_a$ then $G_2(p,n_a,k)= c_2(p,n_a,k)$. 
The total expected payoff, denoted by $T_2(p,x_0,k)$ is
$
T_2(p,x_0,k)= (1-p)F(r_1+r_2-k-2) + 2Qk
$.
\end{theorem}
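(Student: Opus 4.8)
The plan is to treat the two assertions of Theorem~\ref{thm.p2} separately: first the per‑position expected payoff $G_2(p,n_a,k)$, and then the aggregate $T_2(p,x_0,k)$. Both rest on the fact, already established for \ifinesname{2} (cf.\ Theorem~\ref{thm: w equilibrium}), that ${\mathcal R}_2$ is the (unique) equilibrium; since the hypothesis already fixes every agent to play ${\mathcal R}_2$, I may simply compute expectations, exactly as in Corollary~\ref{cor: exp payment}, but with the single‑round continuation replaced by the recursive value $c_2$ of Definition~\ref{def.2df}. The first assertion is then essentially bookkeeping, and the second is the substantive computation.

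For the payoff formula I would fix an agent $a$ at position $n_a$ and split on whether $n_a<r_2$. If $n_a<r_2$, then under ${\mathcal R}_2$ agent $a$ intends to pay $F$ in the first \roundname: with probability $1-p$ he does so and terminates, contributing $F$; with probability $p$ he forgets, pays $0$, and his remaining expected cost is by definition $c_2(p,n_a,k)$, namely $Q$ with probability $\alpha'(p,n_a,k)$ (being forced into the legal process now) plus the second‑\roundname value $\mathcal G_1^{m}(p,n_a,k)$ of Definition~\ref{def.exppp} with the complementary probability. This gives $G_2=(1-p)F+p\,c_2$, which is case~(1). In the complementary case $n_a\ge r_2$ (the boundary at $n_a=r_2$ should be read as belonging here, the overlap in the statement being a typo), agent $a$ pays $0$ for sure, so his whole expected cost is the continuation value and $G_2=c_2$, which is case~(2). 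Both identities are immediate once $c_2$ is read off Definition~\ref{def.2df}, invoking Proposition~\ref{thm.p} for the underlying single‑round payment.

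For the total $T_2$ I would decompose the payment into a first‑ and a second‑\roundname contribution, each of the form (expected number of $F$‑payers)$\,\cdot F+k\cdot Q$. In the first \roundname the front positions $1,\dots,r_2-1$ intend to pay $F$, so $(r_2-1)(1-p)$ of them pay $F$ in expectation, while exactly $k$ agents are forced to pay $Q$; crucially, the phase ordering removes the $F$‑payers in sub‑phase~(a) \emph{before} $Q$ is assigned in sub‑phase~(b), so no agent is charged twice. The survivors all carry $m_a=0$ and $t_a=1$, so conditioned on them the second \roundname is a genuine single‑sorting game \ifinesname{1} with critical position $r_1=r^{2,2}$, and the same census gives $(r_1-1)(1-p)$ agents paying $F$ and $k$ paying $Q$. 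Adding the two contributions produces a total of the shape $(1-p)F\cdot\big((r_2-1)+(r_1-1)\big)+2kQ$, and Lemma~\ref{l.comp} ($r_1+k\le r_2$) can be invoked afterwards if one wishes to re‑express it purely in terms of $r_1$.

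The main obstacle is the census of $F$‑payers entering the second \roundname. Forgetting pushes some front agents into the zero‑paying pool, and it is precisely these agents that compete with the genuine zero‑payers for the $k$ forced‑$Q$ slots; I must verify that this reshuffling leaves the second‑\roundname game a clean \ifinesname{1} instance on the survivors (so that its critical position is indeed $r_1$) and does not disturb the count of front $F$‑payers. This is where an off‑by‑$k$ error is easy to introduce, and indeed the coefficient of $(1-p)F$ in the statement carries an extra $-k$ relative to the naive per‑round symmetry above; I would therefore settle the exact constant by an explicit survivor count valid for large $x_0$ (ensuring each round has at least $k$ zero‑payers so that exactly $k$ pay $Q$), rather than trusting the round‑to‑round analogy alone.
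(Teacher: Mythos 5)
Your route is the paper's own route, and your refusal to manufacture the stated constant turns out to be the right call. A caveat first: the paper never proves this theorem — it sits in leftover draft material after an \texttt{\textbackslash endinput} and is stated without proof. The closest argument the paper actually carries out is the unnumbered proposition preceding Theorem~\ref{thm.divv}, whose proof is precisely your two-\roundname census: $(1-p)(r^{2,1}-1)$ agents expected to pay $F$ plus $k$ forced payments of $Q$ in the first \roundname, the analogous count with $r^{2,2}$ in the second, summing to $F(1-p)(r^{2,1}+r^{2,2}-2)+2kQ$, which is then only \emph{weakened} to the lower bound of Eq.~(\ref{eq: 2 exp rev}) via Lemma~\ref{lemma: crit pos grow}. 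Your per-position bookkeeping for $G_2$ likewise matches the paper's (Corollary~\ref{cor: exp payment}, and Eq.~(\ref{eq: rec exp payment}) with $q=0$), and your reading of the overlapping case split — case (2) meant as $n_a\ge r_2$ — is the intended one.

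On the point you left open: identifying $r_2=r^{2,1}$ and $r_1=r^{2,2}$, your total $(1-p)F(r_1+r_2-2)+2kQ$ coincides exactly with what the paper itself computes, so the statement's $(1-p)F(r_1+r_2-k-2)+2Qk$ undercounts the census by $(1-p)Fk$ and is inconsistent with the paper's own proof of the compiled analogue; the extra $-k$ appears to be an error in this unproved draft statement, not something an explicit survivor count would recover. (The stated formula does survive as a lower bound: inserting the draft's lemma $r_1+k\le r_2$, i.e.\ Lemma~\ref{lemma: crit pos grow}, into either expression recovers Eq.~(\ref{eq: 2 exp rev}), and plausibly the $-k$ arose from conflating that weakening with the exact count.) Your remaining caveat is also legitimate and glossed over by the paper: the round-two count of $F$-payers, and the availability of $k$ zero-payers to absorb the $Q$'s, require $x_0$ large enough that the second \roundname still contains at least $r_1-1+k$ agents; no such hypothesis appears in the statement. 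In short: same method as the paper, correct where the paper is correct, and the unresolved constant is a defect of the statement rather than a gap in your argument.
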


\begin{theorem}
\label{thm.ign}
The strategy of each player in each mixed equilibrium of the 2-Fines game is ${\mathcal R}_2$. Moreover, if $\mathbb F$ denotes the set of 2-Fines games where all players follow ${\mathcal R}_2$, then $\mathbb F$ exhibits the avalanche effect.
\end{theorem}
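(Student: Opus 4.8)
The plan is to prove the two assertions separately, reusing the machinery already established for the $1$-Fines game together with the recursive definitions of $c_2$ and ${\mathcal R}_2$ (Definitions~\ref{def.exppp} and~\ref{def.2df}). For the uniqueness claim I would first reduce the second \roundname to a $1$-Fines instance: since no agents are added after sorting, the sub-game played in the second round is exactly a $1$-Fines game, whose unique equilibrium is ${\mathcal R}_1$. Hence in every equilibrium the second-round behaviour is forced. For the first round I would then argue by induction on the agents taken in the sorted order, precisely as in the $1$-Fines uniqueness proof. The fact that makes this induction well-founded is that an agent's expected payment depends only on the actions of the agents in front of him (Lemma~\ref{lem: irelevant after}), so deviations of agents behind the current one can be ignored.

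\textbf{Uniqueness of the equilibrium.} In the base case the frontmost agent minimises his expected payment by following ${\mathcal R}_2$. In the inductive step I assume every agent in front of $a$ follows ${\mathcal R}_2$; then $a$'s expected first-round payment is $c_2(p,n_a,k)=c_1(p,n_a,k)+(1-\alpha'(p,n_a,k))\,\mathcal G_1^{m}(p,n_a,k)$, and I would verify that this is minimised exactly by paying $F$ with the prescribed probability when $c_2(p,n_a,k)>F$ (equivalently $n_a<r_2$) and paying $0$ otherwise. Because $\alpha'=\alpha^{\rm crit}\le\alpha$, any deviation can only raise the chance of being forced to pay $Q$, so ${\mathcal R}_2$ is not merely a best response but the strictly unique one whenever the defining inequality for $r_2$ is strict, which upgrades the statement from existence to uniqueness over all mixed equilibria.

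\textbf{Avalanche effect.} For the second assertion I would mirror the argument that both $1$- and $2$-Fines exhibit the effect. Since $\lim_{p\to0^+}\alpha(p,n,k)=1$ for every fixed $n$ and $Q>F$, the defining inequality $\alpha(p,r_1,k)Q\le F$ cannot hold at any finite position, so the last-round critical position $r_1\to\infty$ as $p\to0^+$. By Lemma~\ref{l.comp} we have $r_2\ge r_1+k$, so $r_2\to\infty$ as well. Consequently the region in which ${\mathcal R}_2$ prescribes paying $F$ eventually contains every fixed position, i.e. ${\mathcal R}_2\to\sigma^F$ pointwise, and the total expected payment $T_2(p,x_0,k)=(1-p)F(r_1+r_2-k-2)+2Qk$ (Theorem~\ref{thm.p2}) increases as $p$ decreases, verifying condition~(1) of Definition~\ref{def: avalanche effect}.

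I expect the hard part to be the inductive minimisation step in the uniqueness proof. One must control the coupling between an agent's first-round position and his random second-round position, since paying $0$ in the first round simultaneously changes the chance $\alpha'$ of being forced to pay $Q$ immediately and shifts the distribution of positions entering $\mathcal G_1^{m}$. Showing that the monotonicity of $\mathcal G_1^{m}$ in position, combined with $\alpha^{\rm crit}\le\alpha$, makes ${\mathcal R}_2$ \emph{strictly} optimal at every position — so that no alternative mixed strategy can tie it — is what forbids other equilibria and is the most delicate part of the argument.
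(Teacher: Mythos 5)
Your proposal is correct and follows essentially the same route as the paper: the paper likewise treats the second round as a forced $1$-Fines instance, handles the first round position-by-position (deferring uniqueness to exactly the Lemma~\ref{lem: irelevant after}-based induction over the sorted order that you spell out, with the same $\alpha'=\alpha^{\rm crit}\le\alpha$ minimisation), and derives the avalanche effect from the total-payment formula of Theorem~\ref{thm.p2} together with the divergence of the critical positions via Lemma~\ref{l.comp}. The only cosmetic difference is that the paper first verifies that ${\mathcal R}_2$ is an equilibrium by a three-case deviation check (positions with $c_2>F$, the critical position $r_2$, and positions beyond $r_2$) before invoking the induction for uniqueness, whereas you let the induction carry both claims at once.
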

\begin{proof} 
The second statement follows from Theorem \ref{thm.p2}. We now show that ${\mathcal R}_2$ played at all positions of \agentset is a mixed equilibrium. Hence let $n$ be a position and let all positions of \agentset different from $n$ follow ${\mathcal R}_2$. We distinguish two cases:

1. $c_2(p,n,k)> F$. If position $n$ follows ${\mathcal R}_2$ then its expected payoff is $-(1-p)F- pc_2(p,n,k)$. If position $n$ violates ${\mathcal D}_2$ then it pays $F$ with probability $(1-p')$ for $p'> p$ and its expected payoff is $-(1-p')F- p'c_2(p,n,k)$ and thus it will decrease.

2.  Expected payoff of the critical position $r_2$ following ${\mathcal R}_2$ is equal to $-c_2(p,r_2,k)$. If $r_2$ does not follow ${\mathcal R}_2$ then it pays $F$ with some probability $0< 1-p'$ for $p'\geq p$ and thus its expected payoff will decrease to $-(1-p')F- p'Q\alpha(p,r_2,k)$.

3. For $n> r_2$, the expected payoff is bigger than $-c_2(p,r_2,k)$ so the same argument applies.

Finally, the uniqueness is proved as in the proof of Theorem \ref{thm.ign}.

\end{proof}
In $w$-Fines game, $k$ administrators settle fines
in $w$ rounds. Hence, they effectively need to perform $kw$ tasks. A natural question and ourwork in progress is, if we fix the number of tasks, what is the setting of $w$ and $k$ for maximising the total payoff? We report a very interesting partial answer to this basic question below. 
\begin{corollary}
\label{c.2fines}
If $Q>>F$ then the equilibrium strategy for $w=2, k$ give better total payoff than the equilibrium strategy for $w=1, 2k$.
\end{corollary}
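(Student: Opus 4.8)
The plan is to reconstruct the comparison of Theorem~\ref{thm.divv} from the two explicit revenue formulas, since this corollary is exactly the claim that the $w=2$, $k$ setting beats the $w=1$, $2k$ setting when $Q\gg F$. First I would write down the expected total payment in each regime. For \ifinesfullk{1}{2k} the unique equilibrium is the critical strategy \critstrategy (Theorem~\ref{thm.ign}): the $r^{1,1}(2k)-1$ agents in front of the critical position each pay $F$ with probability $1-p$, while the first $2k$ in sorted order are forced to pay $Q$, giving exactly $F(1-p)(r^{1,1}(2k)-1)+2kQ$. For \ifinesfull{2} the unique equilibrium is \icritstrategy{2} (Theorem~\ref{thm: w equilibrium}), and the preceding proposition, Eq.~(\ref{eq: 2 exp rev}), lower-bounds its total expected payment by $2F(1-p)(r^{2,2}(k)-1)+2kQ$, where the factor $2$ on the $F$-term is supplied by Lemma~\ref{lemma: crit pos grow} ($r^{2,1}\ge r^{2,2}+k$).

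Next I would subtract the two expressions. Both carry the same $2kQ$ term, so it suffices to compare the $F$-terms, and since positions are integers the bound $2F(1-p)(r^{2,2}(k)-1)\ge F(1-p)(r^{1,1}(2k)-1)$ follows once $r^{1,1}(2k)<2\,r^{2,2}(k)$. By Definition~\ref{def.crit} the critical position is the least index at which $\alpha(p,\cdot,\cdot)Q\le F$, so this reduces to the single-position comparison $\alpha(p,2n,2k)\le \alpha(p,n,k)$ evaluated at $n=r^{2,2}(k)$: if it holds, then $2n$ already meets the defining threshold for label $2k$, forcing $r^{1,1}(2k)\le 2n=2\,r^{2,2}(k)$.

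Finally I would invoke the hypothesis $Q\gg F$ to place us in the regime $\alpha(p,r^{2,2},k)\le F/Q\le \tfrac14$, so that Proposition~\ref{p.aa} applies and delivers both $np>k$ (ensuring the Chernoff estimate of Lemma~\ref{l.al} is the operative bound) and the doubling inequality $\alpha(p,n,k)\ge \alpha(p,2n,2k)$. Combining this with the two revenue formulas of the first paragraph then closes the argument.

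I expect the main obstacle to be the reduction bookkeeping rather than the probabilistic estimate. One must keep straight that the $w=2$ figure is only a lower bound (it discards the genuine slack $r^{2,1}-r^{2,2}-k\ge 0$ from Lemma~\ref{lemma: crit pos grow}) while the $w=1$ figure is exact, so the comparison is valid only in the claimed direction; and one must track where strictness enters, since Proposition~\ref{p.aa} gives only $\alpha(p,n,k)\ge\alpha(p,2n,2k)$ ``for large enough $n$''. This is precisely where $Q\gg F$ (as opposed to $Q\ge F$) is needed: it pushes $r^{2,2}(k)$ into the range where the Chernoff bound is the effective estimate and is strictly decreasing under the doubling $n\mapsto 2n$, $k\mapsto 2k$, upgrading the weak comparison of the critical positions to the strict payment inequality asserted by the corollary.
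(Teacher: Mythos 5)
Your proposal is correct and takes essentially the same approach as the paper: this corollary is the earlier draft of Theorem~\ref{thm.divv}, whose proof likewise compares the two revenue formulas, gets the factor of two from Lemma~\ref{lemma: crit pos grow} ($r^{2,1}\ge r^{2,2}+k$), and closes with the doubling inequality $\alpha(p,2n,2k)\le\alpha(p,n,k)$ from Proposition~\ref{p.aa} under $Q\gg F$. The strictness slack you flag (the reduction only yields $r^{1,1}(2k)\le 2r^{2,2}(k)$, and the $w=2$ figure is a lower bound) is present, unaddressed, in the paper's own argument as well.
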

\begin{proof}
This follows from Theorem \ref{thm.p}, Theorem \ref{thm.p2} and Proposition \ref{p.aaa}.
\end{proof}

\section{Experiments}
\label{sec: experiments}
We investigate two approaches based on how the agents choose their payments. In Section~\ref{ssec: brs}, we define a simple strategy based on how the agent's position changes over the course of the \gamename. In Section~\ref{ssec: ppo}, we use reinforcement learning to obtain a strategy which approximates equilibrium. In both cases we simplify the model by assuming the function $\pi_a$ is the same for all agents. The code is available at \href{https://github.com/DavidSych/Rule_Enforcing_Through_Ordering/tree/master}{GitHub}.

\subsection{Basic Rational Strategy}
\label{ssec: brs}
To model behaviour of real decision makers, we introduce {\it basic rational strategy} (BRS). Informally, each agent keeps track of a quantity he is willing to pay in each \roundname. If, based on his shift in \gamename since last \roundname, he determines he will reach the beginning before $T$ steps, his willingness to pay increases. Formally,
\begin{definition}[basic rational strategy]
Let $a\in\agentset$, $(n'_a, t'_a, m'_a)$ be the observation of $a$ in previous \roundname, and $(n_a, t_a, m_a)$ his current observation. We call $\omega_a$ the willingness to pay of $a$. In the first \roundname $a$ participates in, i.e. when $t_a = 0$, his willingness to pay is $\omega_a = 0$. In subsequent \roundname games, the willingness to pay is updated before declaring $\pi_a$ according to
\begin{equation}
    \omega_a \gets
    \begin{cases}
    \min(F-m_a, \omega_a + 1), & n_a < (n_a - n'_a)(T - t_a),\\
    \max(0, \omega_a - 1), & {\rm otherwise}.
    \end{cases}
\end{equation}
The strategy of $a$ is to pay $\omega_a$, i.e. $\pi_a = \sigma^{\omega_a}$.
\end{definition}
Note that this is a generalization of the approach introduced in Section~\ref{ssec: round}, as $\pi_a$ is not a function of only the observation in the current \roundname, but also depends on history. This makes this strategy non-Markovian. As such, the Definition~\ref{def: equilibrium} does not apply. However, in our experiments we simply assess the effect of agents using BRS, and make no claims regarding its optimality.

\subsection{Reinforcement Learning}
\label{ssec: ppo}
\begin{figure}[t]
\centering
\includegraphics[width=0.8\textwidth]{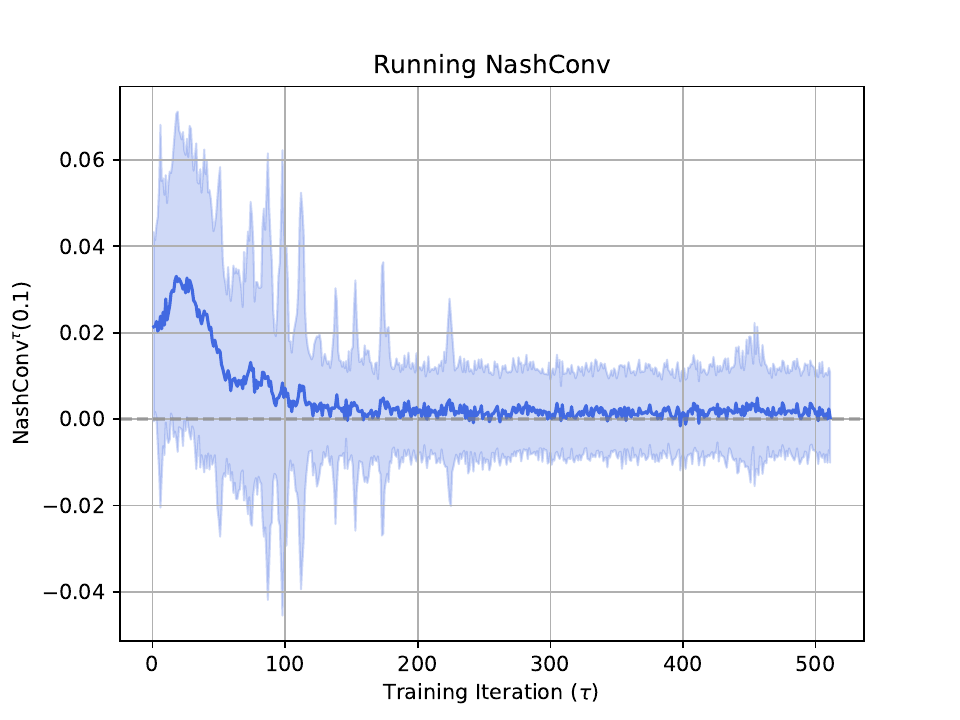}
\caption{Evolution of \nashconv during training, averaged over one hundred random seeds. The colored ares show standard error.}
\label{fig: nashconv}
\end{figure}
In order to approximate an equilibrium of \gamename, we employ an iterative algorithm. In each iteration, the algorithm approximates $\overline{\pi}_a$ such that
\begin{equation}
    \overline{\pi}_a \in 
    \argmax\ \mathbb{E}_{(\overline{\pi}_a, \pi_{-a})}\left[u_a(\overline{\pi}_a, \pi_{-a})\right].
\end{equation}
In words, we find $\overline{\pi}_a$ such that it maximizes utility of $a$, assuming $\agentset\setminus\{a\}$ follow $\pi$. We denote as $\tau$ the iteration of the learning algorithm and $\pi^\tau$ the strategy the algorithm approximates the best-response against in iteration $\tau$.

We use Proximal policy optimization (PPO) \cite{schulman2017ppo} to find $\overline{\pi}$, utilizing trajectories of all terminal agents for the update. For details on our implementation, see Appendix~\ref{app: ppo details}. This approach is not guaranteed to converge in general but if it does converge, the resulting strategy is an equilibrium~\cite{Ghosh2020Operator}. Similar approach was successfully used before~\cite{Baker2020Emergent}. 

\subsubsection{NashConv}
In order to quantify the quality of the learned solution, we adapt the notion of \nashconv \cite{NIPS2017_3323fe11}. \nashconv measures the negative difference in utility agents are expected to receive under $\pi^\tau$ and the approximate best-response $\pi^{\tau + 1}$. We approximate the latter by having a fraction of agents $\rho$ follow $\pi^{\tau + 1}$ while the rest follows $\pi^\tau$.
Formally,
\begin{definition}[NashConv]
Let each agent added to \gamename follow $\pi^{\tau+1}$ w.p. $\rho$ and $\pi^\tau$ otherwise. Let $\overline{\agentset}$ be the set of agents following $\pi^{\tau+1}$ and their expected utility
\begin{equation*}
    \mathcal{BRU}(\rho,\pi^{\tau+1}, \pi^\tau)=
    \mathbb{E}_{\left(\pi^{\tau+1}_{\overline{\agentset}}, \pi^\tau_{-\overline{\agentset}}\right)}\left[u_a(\pi^{\tau+1}_a, \pi^\tau_{-a})|a\in\overline{\agentset}\right].
\end{equation*}
Then
\begin{equation}
    \nashconv^\tau(\rho) = 
    \mathcal{BRU}(\rho,\pi^{\tau+1}, \pi^\tau) - 
    \mathbb{E}_{\pi^\tau}\left[u_a(\pi^\tau)\right].
\end{equation}
\end{definition}
\nashconv and $\epsilon$-equilibrium are closely connected -- if $\rho$ is small enough such that $|\overline{\agentset}| \ll |\agentset|$, then $\nashconv \approx \epsilon$.
In Figure~\ref{fig: nashconv} we present a representative example of the evolution of \nashconv during training. We averaged the results over one hundred random seeds, and also show the standard error. The results suggest that, although there is a considerable amount of noise, the algorithm was able to reach a sufficiently close approximation of the equilibrium. Moreover, we verified this trend translates to other experiments presented below.

\subsection{Results}
\label{ssec: results}
In this section, we numerically demonstrate the Avalanche effect and the Division problem. Specifically, we show the total expected revenue, which is given as 
$
    \mathbb{E}_{\pi}\left[\sum_{a\in\terminalset}m_a\right].
$
Unless stated otherwise, we use $F=T=4$, $Q=6$, $x=x_0=32$, $k=2$ and $p=1/2$ in all our experiments. Note that with these parameters if the ordering is not introduced\footnote{That is if the agents in $\agentset^t$ which are forced to pay $Q$ are selected at random.}, the individual risk in the first Round is $kQ / x = 0.375 \ll F$. Thus it is not rational to pay $F$ and the revenue of the central authority would be $kQ = 12$.

Note that the standard error is considerably high in all figures presented below.
This is partly due to the noise introduced by the learning algorithm, which (if convergent) find a course correlated equilibrium.
As these may vary significantly in e.g. social welfare, similar variance can be expected in our case.

\subsubsection{Avalanche Effect}
In Figure~\ref{fig: avalanche effect} we show the total expected payment as a function of the probability of ignorance $p$, and the number of entering agents $x$. The results suggest that the \gamename exhibits the Avalanche effect in a general setting. In fact, it exhibits both properties of Definition~\ref{def: avalanche effect}. Interestingly, the learned solution achieves a considerably higher total payment compared to BRS. 

\begin{figure*}[t!]
\centering
\begin{subfigure}{0.49\textwidth}
    \includegraphics[width=1\textwidth]{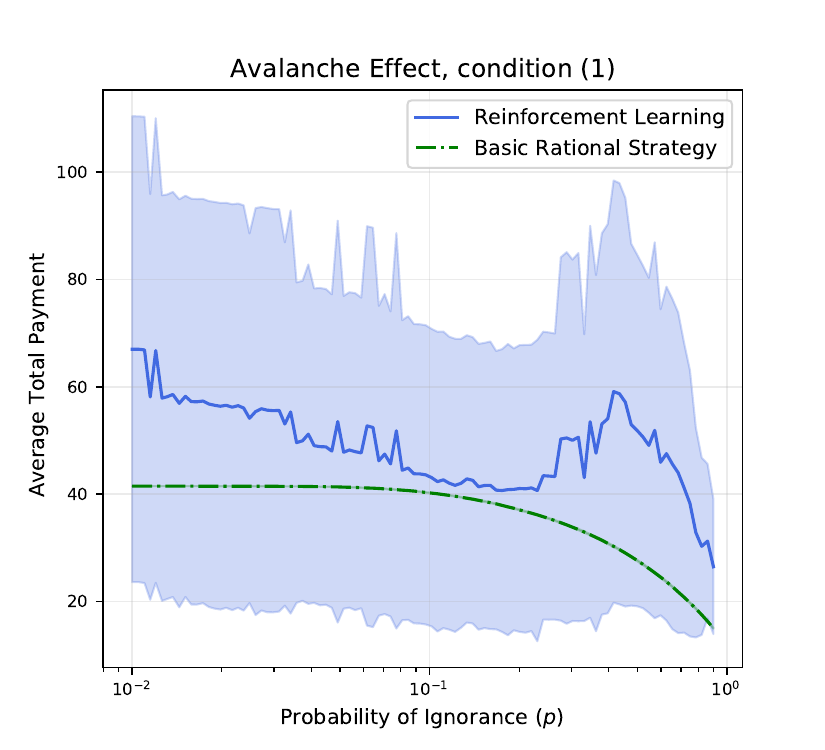}
\end{subfigure}
\hfill
\begin{subfigure}{0.49\textwidth}
    \includegraphics[width=1\textwidth]{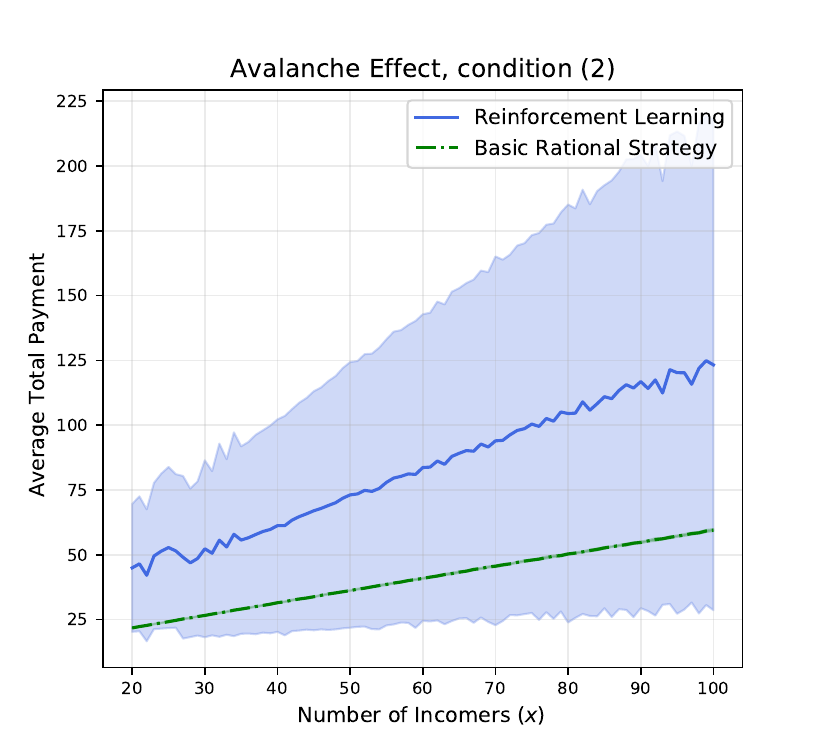}
\end{subfigure}
\caption{Expected total payment of terminal agents for varying probability of ignorance $p$ (left) and number of incoming offenders $x$ (right) averaged over ten random seeds and showing also the standard error. The figures demonstrate the Avalanche effect defined in Section~\ref{ssec: avelanche effect}.}
\label{fig: avalanche effect}
\end{figure*}

\subsubsection{Division problem}
\label{sssec: exp revenue}
In this section, we numerically study the Division problem introduced in Section~\ref{ssec: division problem}. Results for both the Time- and Group-Division problem are presented in Figure~\ref{fig: expected revenue}.

For the Time-Division problem, BRS seems to drastically overpay the learned strategy if the sorting is frequent, i.e. $T$ is large. On the other hand, when $T$ is small the willingness to pay doesn't increase. This leads to paying only $kQ=48$ for $T=1$, while the learned strategy prefers to pay more. When the game is sorted more often, the learned strategy seems to favor lower total payments.

In the Group-Division scenario, both BRS and the learned strategy pay less in larger system. Splitting the game into several smaller thus increases the total payment of the offenders. This is in agreement with the analytic solution presented in Section~\ref{sub.ex}, suggesting the incoming agents don't impact \gamename much.

\begin{figure*}[t!]
\centering
\begin{subfigure}{0.49\textwidth}
    \includegraphics[width=1\textwidth]{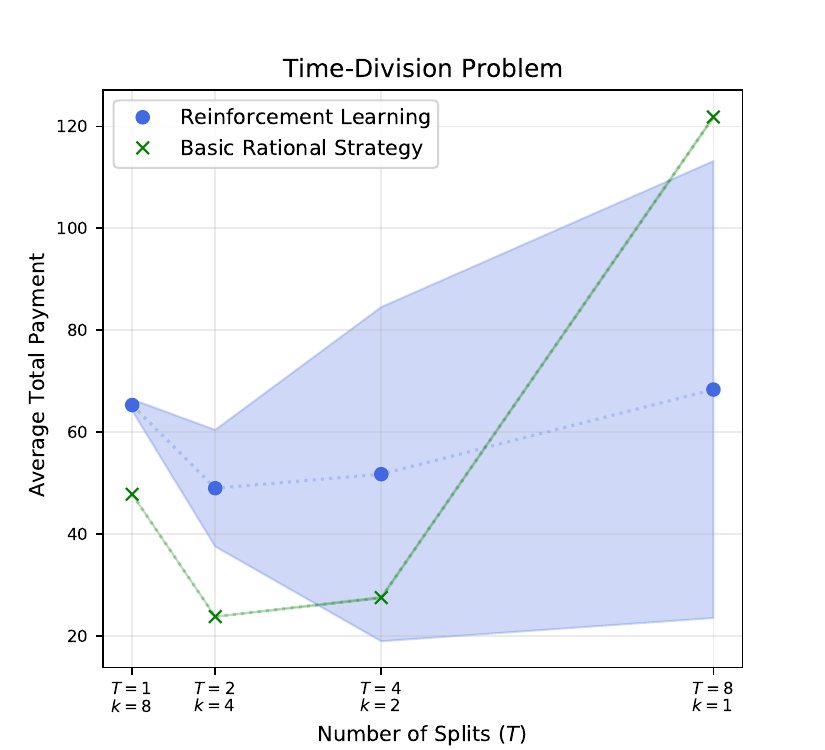}
\end{subfigure}
\hfill
\begin{subfigure}{0.49\textwidth}
    \includegraphics[width=1\textwidth]{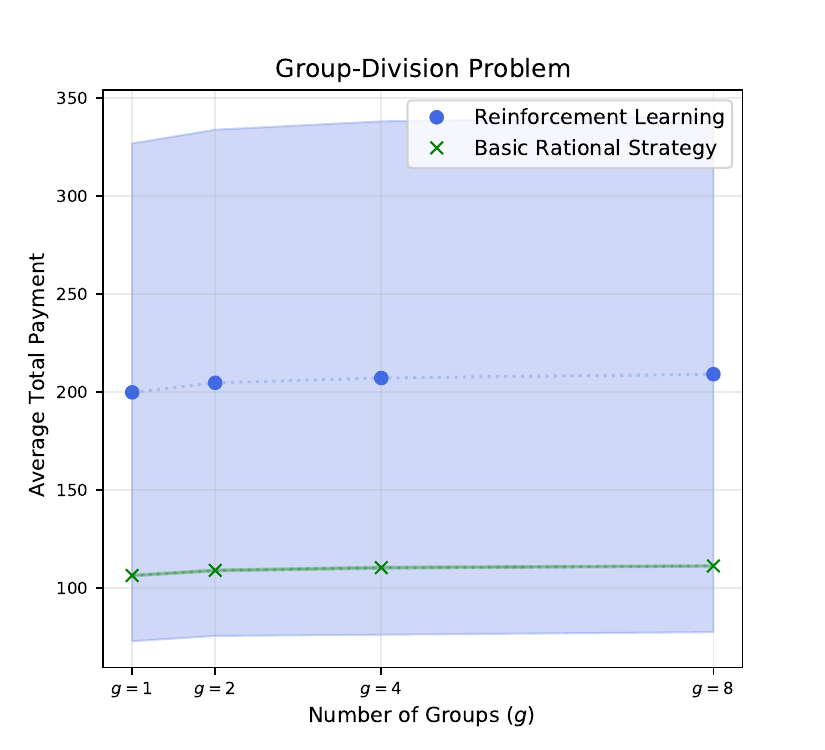}
\end{subfigure}
\caption{Expected total payment of terminal agents for varying number of sortings $T$ (left) and number of splits $g$ (right). The results are averaged over ten random seeds and the colored areas show standard error. The figures investigate the Division problem defined in Section~\ref{ssec: division problem}.}
\label{fig: expected revenue}
\end{figure*}

\subsubsection{Exploitability of Basic Rational Strategy}
The BRS is a heuristic designed to capture realistic behaviour of humans. However, it is not guaranteed to make optimal decisions. In this section, we investigate exploitability of BRS. Specifically, we let 90\% of the agents follow BRS, with the rest refining their strategy using PPO. We compare the expected payment of agents following each of the strategies after convergence. We present our results in Figure~\ref{fig: exploitability of brs} for varying probability of ignorance $p$ and number of entering agents $x$. In all cases the learning algorithm is able to find strategy which achieves vastly lower expected payment, suggesting the BRS is quite exploitable. 

\begin{figure*}[t!]
\centering
\begin{subfigure}{0.49\textwidth}
    \includegraphics[width=1\textwidth]{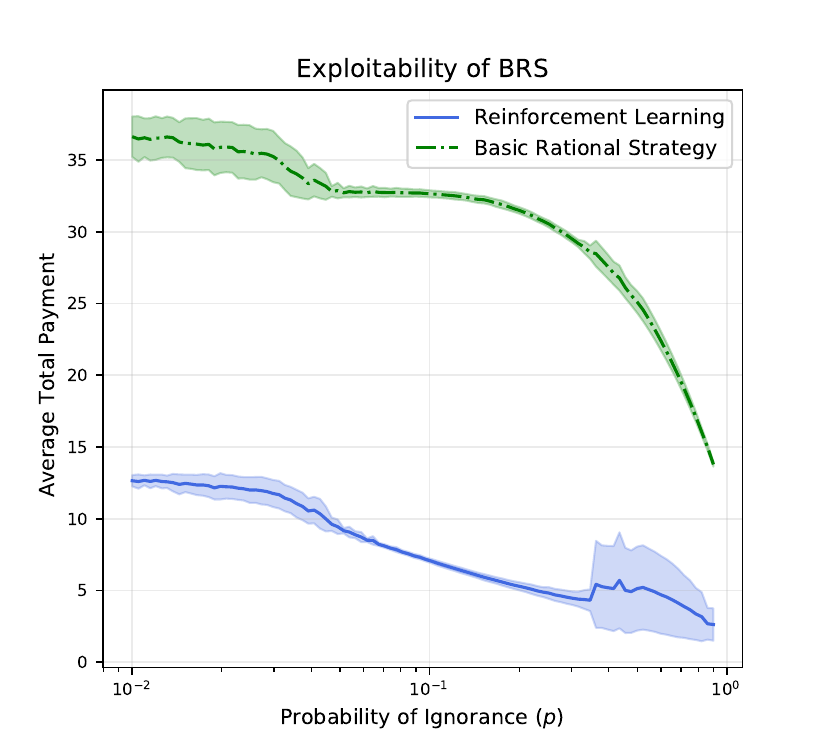}
\end{subfigure}
\hfill
\begin{subfigure}{0.49\textwidth}
    \includegraphics[width=1\textwidth]{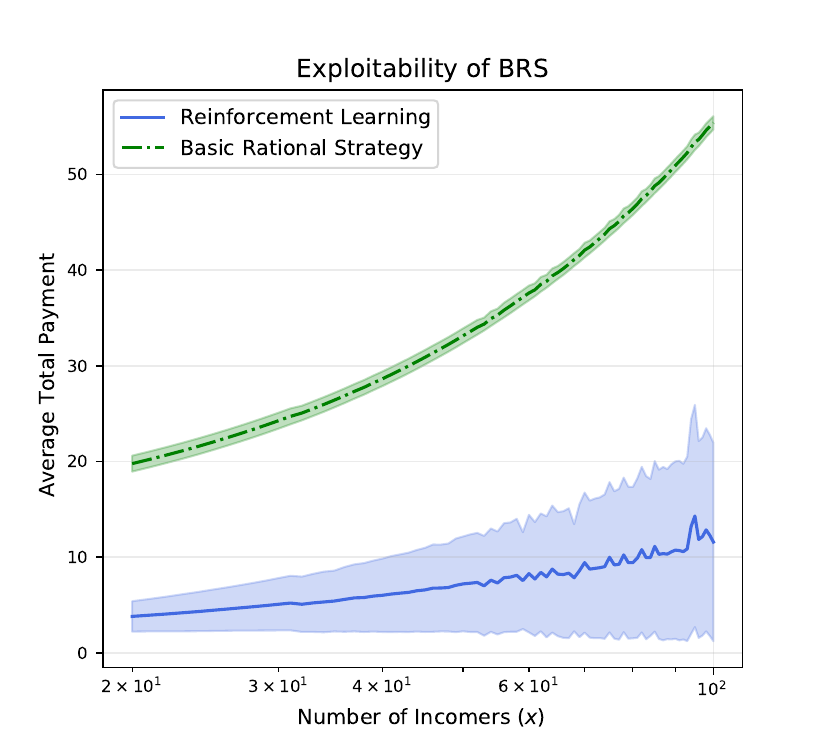}
\end{subfigure}
\caption{Expected total payment of terminal agents for varying probability of ignorance $p$ (left) and number of incoming offenders $x$ (right) averaged over ten random seeds and showing also the standard error. The training was done with 90\% of agents following BRS., i.e. approximating best-response to BRS.}
\label{fig: exploitability of brs}
\end{figure*}

\section{Conclusion}
\label{sec: conclusion}
In this work, we suggest a simple mechanism for rule enforcing, like collecting fines for traffic violations in large cities, by a small number of administrators. 
We show analytically and on realistic experiments that this simple mechanism exhibits the Avalanche effect and thus supports non-cooperation of offenders. We quantify the fines collection in expectation. Finally, we present some initial results towards understanding the effective use of the administrators, i.e., the Division problem.

\medskip

{\it Future work:} 
\*
Further study of the Division problem, in particular possible strengthening of Lemma \ref{lemma: crit pos grow}
is our work in progress. 

We see a limitation of our numerical approach in that we limit ourselves to scenarios where all agents share the same strategy $\pi_a$. We would like to improve on our results by having each agent follow one of a few leaders, similar to how we investigated exploitability of BRS.

\appendix

\section{Proof Of Theorem \ref{c.aa}}
\label{app: sameer proof}

\newtheorem{imr}{}
\newenvironment{cmr}[1]
  {\renewcommand\theimr{#1}\imr}
  {\endimr}
\begin{cmr}{Theorem \ref{c.aa}}
$\alpha(p,n,k)\geq \alpha(p,2n,2k)$ for $1 \leq k<np-p$. 
\end{cmr}

We will prove the theorem in a sequence of lemmas.
Note that $\alpha(p,n,k) = \mathbb{P}[X\leq k]$ for $X\sim B(n-1,p)$.

\begin{lemma}
\label{c.aa1}
For random variables $X\sim B(n,p)$ and $Y\sim B(2n,p)$ and $1< k<np$, 
we have $\mathbb{P}[X\leq k] \geq \mathbb{P}[Y\leq 2k]$.
\end{lemma}
\begin{proof}
We make use of the Camp-Paulson approximation~\cite{SLDJ,GREENBERG201491} to the normal distribution for a binomial distribution which states that for $X\sim B(n,p)$
$$\left\vert\mathbb{P}[X\leq k] - \mathbf{\Phi}\left( \frac{c-m}{\theta}\right)\right\vert \leq \frac{0.007}{\sqrt{np(1-p)}},$$ where $c=(1-b)r^\frac{1}{3},m=1-a,\theta=\sqrt{br^\frac{2}{3}+a},$
$b=\frac{1}{9(k+1)},a=\frac{1}{9(n-k)},r=\frac{(k+1)(1-p)}{p(n-k)}, $
and $\mathbf{\Phi}(x)=\frac{1}{\sqrt{2\pi}}{{\int}^x_{-\infty}}e^{-\frac{t^2}{2}}dt$.

Since $\mathbf{\Phi}$ is an increasing function it suffices to show the inequality between the arguments of $\mathbf{\Phi}$ for $k<np$.
We define $r(n,x)=\frac{(x+1)(1-p)}{p(n-x)}, c(n,x)=\left(1-\frac{1}{9(x+1)}\right)r(n,x)^\frac{1}{3} = \frac{9x+8}{9(x+1)}r(n,x)^\frac{1}{3}$, $m(n,x)=1-\frac{1}{9(n-x)}$ and $\theta(n,x)=\sqrt{\frac{1}{9(x+1)}r(n,x)^{2/3} +\frac{1}{9(n-x)}}$.

Thus we need to show that $\frac{c(n,x) - m(n,x)}{\theta(n,x)} > \frac{c(2n,2x) - m(2n,2x)}{\theta(2n,2x)}$ for $k < np$.
We prove this in two parts. Our first claim will show that there is a $K_n< np$, where ${c(n,x) - m(n,x)}$ is zero.
\qed
\end{proof}

\begin{claim}
$c(n,x)-m(n,x)$ is an increasing function of $x$ for $0< x<n$ and
there exists $K_n<np$ such that $c(n,x)<m(n,x)$ for all $x<K$ and $c(n,x)>m(n,x)$ for all $x>K$. 
\end{claim}
\begin{proof}
It is easy to see that for $0<x<n$, $r(x)$ and $c(x)$ are increasing functions and $m(n,x)$ is a decreasing function. Thus for $0<x<np$ we have $1> m(n,x)\geq 1-\frac{1}{9(n-np)}$ and $(1-\frac{1}{9(x+1)})\leq (1-\frac{1}{9(np+1)})$.
We first find the condition for $x>0$ such that $r(n,x)<\left( \frac{y-1}{y} \right)^3$ for some $y>0$. Note here that we can assume that such an $x$ exists as we are assuming $p>\frac{1}{n}$. The inequality holds for all $x  <  \frac{np(y-1)^3-y^3(1-p)}{y^3(1-p)+p(y-1)^3} $. Since $y>0$, we have that the inequality holds for all $x< np\left(\frac{y-1}{y}\right)^3 -1 + p$. Thus for $y=9(n-np)$ we have, $c(n,x)=\frac{9x+8}{9(x+1)}\left(1-\frac{1}{9(n-np)}\right) < m(n,x)$.

\remove{
\begin{eqnarray*}
\frac{(x+1)(1-p)}{p(n-x)} & < & \left( \frac{y-1}{y} \right)^3 \\
x & < & \frac{np(y-1)^3-y^3(1-p)}{y^3(1-p)+p(y-1)^3} \\
\end{eqnarray*}

Since $y>0$, we have $\frac{np(y-1)^3-y^3(1-p)}{y^3(1-p)+p(y-1)^3} > \frac{np(y-1)^3-y^3(1-p)}{y^3}$ and the inequality holds for all $x< np\left(\frac{y-1}{y}\right)^3 -1 + p$.
Thus for $y=9(n-np)$ we have for all $x< np\left(\frac{y-1}{y}\right)^3 -1 + p$, $c(n,x)=\frac{9x+8}{9(x+1)}\left(1-\frac{1}{9(n-np)}\right) < m(n,x)$.
}
\begin{eqnarray*}
c(n,np) & = & \frac{9np+8}{9(np+1)}\left(\frac{(np+1)(1-p)}{p(n-np)}\right)^{1/3}=\frac{9np+8}{9(np+1)}\left(\frac{np+1}{np}\right)^{1/3}\\
&\geq& \frac{9np+8}{9(np+1)}\left(\frac{np+1}{np}\right)^{1/3} = \frac{9np+8}{9np}\left(\frac{np}{np+1}\right)^{2/3} \\
& = &  \left(1+\frac{8}{9np}\right)\left(\frac{np}{np+1}\right)^{2/3} \\ 
\end{eqnarray*}
It is easy to see that $\left(1+\frac{8}{9x}\right)\left(\frac{x}{x+1}\right)^{2/3} >1$ for all $x>0$.
Thus $c(n,np) > 1 > m(n,np)$.
This proves the claim.
\qed
\end{proof}
Notice that $K_n$ is very close to $np$ but nevertheless lower than $np$.
We are now ready to partly prove Theorem~\ref{c.aa}. 
\begin{lemma}
\label{lem:kn}
For $0<x<\frac{K_{2n}}{2}$, $\frac{c(n,x) - m(n,x)}{\theta(n,x)} > \frac{c(2n,2x) - m(2n,2x)}{\theta(2n,2x)}$.
\end{lemma}
\begin{proof}

To do this we see some properties of $\frac{c(n,x) - m(n,x)}{\theta(n,x)}$.
Individually the functions compare as follows for $1\leq x<n$.

\begin{align*}
\left(\frac{\theta(2n,2x)}{\theta(n,x)}\right)^2 
& = \frac{1}{2}\left(\frac{x+1}{2x+1}\right)^{1/3} \frac{{(2n-2x)^{1/3}(1-p)^{2/3}+(2x+1)^{1/3}p^{2/3}}} {{(n-x)^{1/3}(1-p)^{2/3}+(x+1)^{1/3}p^{2/3}}} \\
& \leq  \frac{1}{2}\left(\frac{x+1}{2x+1}\right)^{1/3} \frac{{(2n-2x)^{1/3}(1-p)^{2/3}+(2x+2)^{1/3}p^{2/3}}} {{(n-x)^{1/3}(1-p)^{2/3}+(x+1)^{1/3}p^{2/3}}}\\
& \leq  \frac{1}{2^{2/3}} \left(\frac{x+1}{2x+1}\right)^{1/3} < 1
\end{align*}


Also $\frac{c(n,x)}{c(2n,2x)}  = 2^{1/3}\left(\frac{9x+8}{18x+8}\right)\left(\frac{2x+1}{x+1}\right)^{2/3} >1$ as this is a decreasing function for $x>0$ with its limit at $1$, and $m(n,x) - m(2n,2x) = \frac{1}{9(2n-2x)} - \frac{1}{9(n-x)} = -\frac{1}{9(2n-2x)} < 0$.


Thus we have $ c(2n,2x) - m(2n,2x) < c(n,x) - m(n,x) $.
It follows that $K_n \leq \frac{K_{2n}}{2}$.
Thus for $x\leq K_n$ we have $  \frac{\theta(2n,2x)}{\theta(n,x)} \frac{c(n,x) - m(n,x)}{c(2n,2x) - m(2n,2x)}<1$ i.e.,\\ $\left\vert\frac{c(2n,2x) - m(2n,2x)}{\theta(2n,2x)}\right\vert \geq \left\vert\frac{c(n,x) - m(n,x)}{\theta(n,x)}\right\vert$ but both quantities are negative and so $\frac{c(2n,2x) - m(2n,2x)}{\theta(2n,2x)} \leq \frac{c(n,x) - m(n,x)}{\theta(n,x)}$. For $K_n< x < \frac{K_2n}{2}$ we have $\frac{c(2n,2x) - m(2n,2x)}{\theta(2n,2x)} \leq 0 \leq \frac{c(n,x) - m(n,x)}{\theta(n,x)}$.
\qed 
\end{proof}




Lemma~\ref{lem:kn} allows us to state a weaker result.

\begin{corollary}
For random variables $X\sim B(n,p)$ and $Y\sim B(n+\left\lceil n/p\right\rceil,p)$ and $k<\max \{\frac{n}{2},np\}$, we have $\mathbb{P}[X\leq k] \geq \mathbb{P}[Y\leq 2k]$.
\end{corollary}
\begin{proof}
The proof follows from the fact that $n+\frac{n}{p} > 2n$ and $2x < (np+n)\left(\frac{9\frac{n}{p} -9np-1}{9\frac{n}{p}-9np}\right)-1+p<K_{n+\frac{n}{p}}$.
\qed
\end{proof}

Now we can complete the proof of Theorem~\ref{c.aa}.
\begin{proof}[of Theorem~\ref{c.aa}]
\allowdisplaybreaks
Notice that $c-m$ and $\theta$ are monotonically increasing in $x$. The difference between using $n$ and $2n$ is just the rate of increase. We have shown for $x< K_{2n}$, $(c-m)(n,x)\theta^2(2n,2x) > (c-m)(2n,2x)\theta^2(n,x)$. Now we show the inequality holds for $x=np$, i.e., the two functions haven't crossed each other. 

Define $r_1=\frac{np+1}{np},r_2=\frac{2np+1}{2np},b_1=\frac{1}{9(np+1)},b_2=\frac{1}{9(2np+1)}$,  $a=\frac{1}{18(n-np)}$, $\theta_1=b_1r_1^{2/3}+2a$ and $\theta_2=b_2r_2^{2/3}+2a$.
Thus we have 
\begin{equation}
1\leq \frac{r_1}{r_2} = 2\left(\frac{np+1}{2np+1}\right) =  \frac{2b_2}{b_1} \leq 2    
\end{equation}


\begin{align*}
&(c-m)(n,np)\theta^2(2n,2np)-(c-m)(2n,2np)\theta^2(n,np)  \\&=((1-b_1)r_1^{1/3} -1+2a)\theta_2-((1-b_2)r_2^{1/3} -1+a)\theta_1 &\\
& = \frac{2^{1/3}(2np+1)^{1/3}(9np+8)(n-np)-2^{2/3}(np+1)^{1/3}(18np+8)(n-np)}{81[(np+1)(2np+1)]^{2/3}2np(n-np)}\\&\quad +\frac{2[2np(np+1)]^{2/3}-2[np(2np+1)]^{2/3}+18(np+1)2^{1/3}[np(2np+1)]^{2/3}]^{2/3}}{81[(2np+1)(np+1)]^{2/3}(2np)(n-np)2} \\&\quad -\frac{18(2np+1)[2np(np+1)]^{2/3}}{81[(2np+1)(np+1)]^{2/3}(2np)(n-np)2}\\&\quad+\frac{18(n-np)[np(np+1)]^{1/3}(2np+1)^{2/3}-9(n-np)[2np(2np+1)]^{1/3}(np+1)^{2/3}}{81(n-np)[(np+1)(2np+1)]^{2/3}(2np)}\\ &\quad  +\frac{9[(np+1)(2np+1)]^{2/3}(2np)+2(np+1)^{2/3}(2np+1)^{1/3}(2np)^{1/3}}{81(n-np)[(np+1)(2np+1)]^{2/3}(2np)2} \\&\quad -\frac{2(np+1)^{1/3}(2np+1)^{2/3}(np)^{1/3}}{81(n-np)[(np+1)(2np+1)]^{2/3}(2np)2}\\
\end{align*}
Using $p^3-q^3=(p-q)(p^2+pq+q^2)$ we have,
\small{
\begin{align}
\label{eq:part1}
    & 2^{1/3}(2x+1)^{1/3}(9x+8)-9[2x(2x+1)]^{1/3}(x+1)^{2/3}\nonumber \\& =2^{1/3}(2x+1)^{1/3}[8+9x^{1/3}(x^{2/3}-(x+1)^{2/3}]\nonumber\\
    &=2^{1/3}(2x+1)^{1/3}\left[8+\left(\frac{-9x^{1/3}(2x+1)}{x^{4/3}+x^{2/3}(x+1)^{2/3}+(x+1)^{4/3}}\right)\right]
\end{align}
}
and,
\small{
\begin{align}
    & 18[x(x+1)]^{1/3}(2x+1)^{2/3}-2^{2/3}(x+1)^{1/3}(18x+8)\nonumber\\
    & =(x+1)^{1/3}[8+18x^{1/3}((2x+1)^{2/3}-(2x)^{2/3}]\nonumber\\
    &=(x+1)^{1/3}\left[8+\left(\frac{18x^{1/3}(4x+1)}{(2x+1)^{4/3}+(2x(2x+1))^{2/3}+(2x)^{4/3}}\right)\right]\label{eq:part2}
\end{align}
}
Note that the sum of~\ref{eq:part1}~and~\ref{eq:part2} is positive for $x\geq 1$. Thus all the terms with $n-np$ in the numerator add up to a positive quantity. The only other negative component is $\frac{18(np+1)2^{1/3}[np(2np+1)]^{2/3}-18(2np+1)[2np(np+1)]^{2/3}}{81[(2np+1)(np+1)]^{2/3}(2np)(n-np)2}$, which is dominated by $\frac{9[(np+1)(2np+1)]^{2/3}(2np)}{81(n-np)[(np+1)(2np+1)]^{2/3}(2np)2}$.

Thus $\frac{c(n,np) - m(n,np)}{\theta(n,np)}\Bigg/ \frac{c(2n,2np) - m(2n,2np)}{\theta(2n,2np)} \geq \frac{\theta(n,np)}{\theta(2n,2np)} \geq 1.$

\qed
\end{proof}

%

\begin{table}[h!]
\centering
\begin{tabular}{ |c | c | c| }
\hline
Parameter & Value & Description \\ 
\hline
\hline
$\varepsilon$ & 0.05 & Policy update clipping\\
\hline
$\gamma$ & 1 & Reward discounting\\
\hline
$\lambda$ & 0.95 & Advantage decay factor\\
\hline
$N_{\rm train}$ & 32 & Number of training updates per cycle\\
\hline
$N_{\rm epochs}$ & 512 & Number of training epochs\\
\hline
$N_{\rm train}$ & $2\cdot 10^{4}$ & Train buffer size\\
\hline
$\alpha_{\rm actor}$ & $3\cdot 10^{-4}$ & Actor learning rate\\
\hline
$\alpha_{\rm critic}$ & $10^{-3}$ & Critic learning rate\\
\hline
$c_{\rm H}$ & $10^{-3}$ & Entropy regularization weight\\
\hline
$\overline{c}$ & 0.1 & Gradient norm clipping\\
\hline
\end{tabular}
\caption{Hyperparameters of the learning algorithm.}
\label{tab: hyperparameters}
\end{table}

\section{Learning Algorithm}
\label{app: ppo details}
The shared strategy $\pi_a$ is represented by a neural network and trained from trajectories of all terminal agents. When selecting the strategy for a \roundname, we mask all actions which would lead to $m_a + \mu_a > F$. This makes the agents unable to overpay the fine $F$.
We use fully-connected networks for both the actor and the critic. Both take as input the observation\footnote{We normalize the observation to $[0, 1]^3$.} of $a$ in \roundname, i.e. $(n_a, t_a, m_a)$. The actor network has two hidden layers with four hidden units, and the critic has three hidden layers with 32 units each, all using the ReLU activation function. The rest of the hyperparameters are given in Table~\ref{tab: hyperparameters}.





\bibliographystyle{unsrt} 
\bibliography{main}


\end{document}
